\title{Coloring Down: $3/2$-approximation for special cases of the weighted tree augmentation problem\footnote{We would like to acknowledge Robert Carr and L\'aszl\'o V\'egh for their numerous discussions on this problem. In addition, we would like to thank Sinziana Munteanu for her developments on the structure of the trees. }}
\author{Jennifer Iglesias \thanks{This material is based upon work supported by the National Science Foundation Graduate Research Fellowship Program under Grant No. 2013170941.} \and R. Ravi\thanks{This material is based upon research supported in part by the U. S. National Science Foundation under award number CCF-1527032.}
}
\newcommand{\newthmwithin}[3]{\newtheorem{#1q}{#2}[#3]
                        \newenvironment{#1}{\begin{#1q}\sf}{\end{#1q}}}
\newcommand{\newthm}[3]{\newtheorem{#1q}[#2q]{#3}
                        \newenvironment{#1}{\begin{#1q}\sf}{\end{#1q}}}
\newcounter{note}[section]
\begin{document}
\maketitle

\begin{abstract}

In this paper, we investigate the weighted tree augmentation problem (TAP), where the goal is to augment a tree with a minimum cost set of edges such that the graph becomes two edge connected. First we show that in weighted TAP, we can restrict our attention to trees which are binary and where all the non-tree edges go between two leaves of the tree. We then give two different top-down coloring algorithms. Both of our algorithms differ from known techniques for obtaining a $\frac32$-approximation in unweighted TAP and current attempts to reach a $\frac32$-approximation for weighted TAP.
%Combining these techniques with the an appropriate linear program as a starting point may lead a simple $3/2$ approximation for TAP.

The first algorithm we describe always gives a 2-approximation starting from any feasible fractional solution to the natural tree cut covering LP. When the structure of the fractional solution is such that all the edges with non-zero weight are at least $\alpha$, then this algorithm achieves a  $\frac{2}{1+\alpha}$-approximation. We propose a new conjecture on extreme points of LP relaxations for the problem, which if true, will lead to a potentially constructive proof of an integrality gap of at most $\frac32$ for weighted TAP.
In the second algorithm, we introduce simple extra valid constraints to the tree edge covering LP. In this algorithm, we focus on deficient edges, edges which get covered to an extent less than $\frac43$ in the fractional solution. We show that in the support of extreme points for this LP, deficient edges occurs in node-disjoint paths in the tree.
When the number of such paths is at most two, we
give a top-down coloring algorithm which decomposes $\frac32$ times the fractional solution into a convex combination of integer solutions.
We believe our algorithms will be useful in eventually resolving the integrality gap of linear programming formulations for TAP.

We also investigate a variant of TAP where each edge in the solution must be covered by a cycle of length three (triangle). We give a $\Theta(\log n)$-approximation algorithm for this problem in the weighted case and a $4$-approximation in the unweighted case.
%These results extend appropriately to the case when all edges must be in short cycles of constant length rather than just three.
\end{abstract}

\noindent \textbf{Keywords:} approximation algorithms, tree augmentation, linear programming, spanning tree, two-edge connectivity, LP rounding

\newpage
\section{Introduction}
We consider the {\em weighted tree augmentation problem (TAP)}: Given an undirected graph $G=(V,E)$ with non-negative weights $c$ on the edges, and a spanning tree $T$, find a minimum cost subset of edges $A \subseteq E(G) \setminus E(T)$ such that $(V, E(T) \cup A)$ is two-edge-connected. We will call the elements of $E(T)$ as (tree) edges and those of $E(G) \setminus E(T)$ as {\em links} for convenience. A graph is \emph{two-edge connected} if the removal of any edge does not disconnect the graph, i.e., it does not have any cut edges.
Since cut edges are also sometimes called bridges, this problem has also been called {\em bridge connectivity augmentation} in prior work~\cite{frederickson1981approximation}.

While TAP is well studied in both the weighted and unweighted case~\cite{frederickson1981approximation, khuller1993approximation, ravithesis, cohen2013, CheriyanG15a, 32KortsarzNutov, adjiashvili2016improved, fiorini2017frac}, it is NP-hard even when the tree has diameter $4$~\cite{frederickson1981approximation} or when the set of available links form a single cycle on the leaves of the tree $T$~\cite{cheriyan1999}. Weighted TAP remains one of the simplest network design problems without a better than $2$-approximation. TAP can also be viewed as a covering problem. The cuts in a tree which have a single edge crossing them are exactly the cuts that must be covered.

A link $\ell$ is said to \emph{cover} an edge $e$ if the unique cycle of $\ell+T$ contains $e$. Here we use $\delta(e)$ for a tree edge $e$ to denote the set of links which cover $e$.
The natural covering linear programming relaxation for the problem, EDGE-LP, is a special instance of a set covering problem with one requirement (element) corresponding to each cut edge in the tree (Since the tree edges define shores that form a laminar family, this is also equivalent to a laminar cover problem~\cite{cheriyan1999}).
\begin{align}
\min \sum_{\ell \in E} &c_\ell x_\ell& \nonumber \\
\label{eq:edge} x(\delta(e))&\geq 1 \quad &\forall e\in T \\
\label{eq:nonneg} x_\ell&\geq 0 \quad &\forall \ell\in E
\end{align}

Fredrickson and J\'aj\'a showed that the integrality gap for EDGE-LP can not exceed $2$~\cite{frederickson1981approximation} and also studied the related problem of augmenting the tree to be two-node-connected (biconnectivity versus bridge-connectivity augmentation)~\cite{fredrickson1982relationship}. Cheriyan, Jord\'an, and Ravi, who studied half-integral solutions to EDGE-LP and proved an integrality gap of $\frac43$ for such solutions, also conjectured that the overall integrality gap of EDGE-LP was at most $\frac43$~\cite{cheriyan1999}. However, Cheriyan et al.~\cite{cheriyan2008integrality} demonstrated an instance for which the integrality gap of EDGE-LP is at least $3/2$.

We study the integrality gap of the EDGE-LP and its generalizations in this work.
We first show that without loss of generality, we can focus our attention on binary trees where every node has degree 1 or 3 (and every link goes between a pair of leaves). By focusing on the internal nodes of degree 3, we can add a simple valid constraint. In particular,  at any node of degree 3, since no link can cover all three edges which meet at this node, the total number of (integral) links which must cover its neighbors is at least 2.
This gives one additional constraint per internal node that we can add to the EDGE-LP. The resulting LP, called the NODE-LP follows where we use $\delta_T(v)$ for a node $v$ to refer to its three incident edges in the tree $T$.

\begin{align}
\min \sum_{\ell \in E} &c_\ell x_\ell& \nonumber \\
 x(\delta(e))&\geq 1 \quad &\forall e\in T \nonumber \\
\label{eq:node}x(\delta(e_1)\cup\delta(e_2)\cup \delta(e_3)) &\geq 2 \quad & \forall v\in T \text{ and } \delta_T(v) = \{e_1,e_2,e_3\}\\
 x_\ell&\geq 0 \quad &\forall \ell\in E \nonumber
\end{align}

Fiorini et. al extended node constraints for all classes of odd subsets of tree edges as $\{0,\frac12\}$-Chv{\'a}tal-Gomory cuts of EDGE-LP to obtain new constraints on all odd sets of edges~\cite{fiorini2017frac}. We call their extended linear program the ODD-LP. Since we will show that we can assume the tree is binary, every node has odd degree (1 or 3) in the input tree, so if $S\subseteq V$ is odd, then it follows that  $\delta(S)\cap T$ is also odd. Using this observation, we can write the ODD-LP as follows.
Recall that $\delta(S)$ for $S \subset V$ is the set of all edges and links with exactly one endpoint in $S$.
\begin{align}
\min \sum_{\ell \in E} &c_\ell x_\ell& \nonumber \\
\label{eq:odd} x(\delta(S))  + \sum_{e\in \delta(S)\cap T} x(\delta(e)) &\geq |\delta(S)\cap T|+1 \quad &\forall S \subseteq V, |S| \text{ odd} \\
 x_\ell&\geq 0 \quad &\forall \ell\in E \nonumber
\end{align}

In Appendix~\ref{app:odd}, we provide a simple independent proof of validity of these odd set constraints due to Robert Carr.
% and a second way of formulating the ODD-LP.

In addition to the standard version, we also study the problem of 3TAP in which every tree edge in the final solution must be in a cycle of length 3 (instead of every tree edge being in a cycle of any length). This is a natural variant of TAP. While TAP models increasing the resilience of a tree network, 3TAP requires local resilience: i.e., in case of any edge failure, the overhead of implementing a rerouting protocol is not too high (3TAP solutions only need the identity of the midpoint of the alternate 2-path for every edge in the solution). 

\subsection{Related Work}

Weighted TAP has several 2-approximation algorithms.
The earliest proof of this result used methods that were tailored for this problem: Frederickson and J\'aJ\'a~\cite{frederickson1981approximation} convert the problem into one of finding a minimum weight arborescence in an appropriate directed graph: First, they root the given tree at an arbitrary node and direct it outwards; Links that go from a node to an ancestor are directed upward in the tree, while cross links are replaced by two links of the same weight going from each endpoint to their least common ancestor in the tree. After given the original tree edges directed downward weight zero, their method finds a minimum weight in-arborescence pointing to the root, which they argue is of cost at most twice the optimal weighted TAP solution for this instance (coming from the duplication of cross links). Khuller and Thurimella improved the runtime of this algorithm~\cite{khuller1993approximation}. It is also worth noting that the directed instance when viewed as an undirected instance of TAP consists of all links going top-down in the tree (since cross links are replaced with two such links from their ends to their lca). The EDGE-LP for all links going top-down in a tree is totally unimodular (see, e.g., Section 2 of ~\cite{golovin2006approximating}). Hence this version can be solved to optimality (providing an alternate to the use of the in-arborescence algorithm).
Later,  other $2$-approximation algorithms have been devised for weighted TAP using other techniques such as the primal-dual method~\cite{ravithesis} and iterative rounding~\cite{jain2001factor}.

Special cases of weighted TAP has also been investigated.
Cheriyan, Jord\'an and Ravi~\cite{cheriyan1999} developed a $\frac43$-approximation for TAP when the optimal fractional solution is half-integral.
Another special case of weighted TAP is when the tree has bounded depth. In this special case, Cohen and Nutov showed there exists a $(1+\ln 2)$-approximation~\cite{cohen2013}.
Recently, Adjiashvili~\cite{adjiashvili2016improved} showed a 1.96-approximation for another special case of weighted TAP where all link weights are between $1$ and some constant $M$ by using a bundling type linear program.
Building off this work,  Fiorini et. al~\cite{fiorini2017frac} generalized the constraints from~\cite{kortsarz2015} and combined them with the bundle constraints from~\cite{adjiashvili2016improved} to propose the ODD-LP we described above and achieved a $\frac32+\epsilon$ approximation for the same special case (when all the costs are between $1$ and some constant $M$). Another recent paper by Nutov takes a subset of Adjiashvili's constraints and achieves a $\frac{12}{7}+\epsilon$ approximation when all the costs are between $1$ and some constant $M$~\cite{nutov2017}. All of these techniques rely heavily on the bundle constraints that are focused on link weights being in a bounded range; hence they do not seem to be generalizable to the case of arbitrary weights. We believe the general problem requires a more polyhedral approach of the type we investigate.

Numerous papers attempted to reach a target $\frac32$-approximation in the unweighted case of TAP when all links have the same weight. One paper by Kortsarz and Nutov~\cite{kortsarz2015} presents a new linear program with a 1.75-approximation for the unweighted case, in the hope that this linear program could help break the 2-approximation barrier for the weighted case. This LP used properties of an optimal solution for the unweighted case to add multiple new constraints; In retrospect, these additional constraints are all included in the ODD-LP.
Two papers achieved a $\frac32$-approximation for the unweighted case with very different approaches; one paper by Kortsarz and Nutov relies on a unique token giving argument~\cite{32KortsarzNutov}. The other paper by Cheriyan and Gao uses semi-definite programming~\cite{CheriyanG15a, CheriyanG15b} to arrive at an initial fractional solution for which this integrality gap is proved. While both of these approaches are very different, they still heavily rely on the fact that all the links have the same weight.

\subsection{Our Results}
Our results gives new approaches to determine the integrality gap of weighted TAP: our methods provide constructive proofs of convex decompositions of given fractional solutions appropriately scaled into integer solutions.
\begin{enumerate}
\item We show that any instance of weighted TAP can be reduced to equivalent instances where the underlying tree is binary and all the links have their endpoints at leaves (Theorem~\ref{thm:structure} in Section~\ref{sec:structure}). The simpler structure of input instances help us in several of our proofs and may also be key in future approaches in settling the integrality gap of weighted TAP.
\item We give a simple new top-down coloring algorithm that gives a constructive proof of the integrality gap of 2 for EDGE-LP by providing a convex decomposition.
    Furthermore, if the minimum non-zero value in the solution for any link is $\alpha$ then we can achieve an improved $\frac{2}{1+\alpha}$-approximation (Theorem~\ref{thm:largelinksTwo} in Section~\ref{sec:large}). This result generalizes the result of Cheriyan et al.~\cite{cheriyan1999}which we can recover by setting $\alpha=\frac12$. Even more interestingly, this provides a new $\frac32$-approximation when all nonzero values in the solution are at least $\frac13$.
%Our result does better than a $2$-approximation to weighted TAP when we have a fractional solution where the non-zero fractional parts are bounded below.

\item We provide a new conjecture on the ODD-LP (Conjecture~\ref{con:onethird}) that says that every vertex solution to this LP has all large nonzero entries (greater than $\frac13$) or there is a single very large valued entry (at least $\frac23$). In the former case, we can use the previous theorem to get a $\frac32$-approximation while in the latter, we can apply one step of iterative rounding~\cite{lrs}, and reapply the conjecture to prove a $\frac32$-approximation.

\item We provide a $\frac32$-approximation for weighted TAP based on fractional solutions to NODE-LP with a particular structure. Let a \emph{deficient} edge be an edge which gets covered to the extent less than $\frac{4}{3}$ by this fractional solution. In Section~\ref{sec:deficient}, we show that if the deficient edges for the NODE-LP form at most two paths in the tree, then we can extend our coloring construction to give a $\frac32$-approximation.
    
\item Even though we provide improved approximations for specially structured extreme points of NODE-LP, we can show that such constraints do not strengthen EDGE-LP. In particular, in Section~\ref{sec:obs}, we show how to transform any TAP instance to a slightly bigger one by a gadget expansion at every node so that any feasible solution to the EDGE-LP on the original instance is feasible to the NODE-LP in the expanded instance. Moreover, EDGE-LP has extreme points which violate our conjecture above, motivating a deeper study of ODD-LP for future work.
    
\item In Section~\ref{sec:threecycle}, we provide a complete study of 3TAP in which every tree edge must be in a triangle in the final solution. Via a reduction from set cover, we show an $\Omega(\log n)$-inapproximability result and give a matching approximation algorithm. In the unweighted case, we show that any minimal solution gives a 4-approximation.

\end{enumerate}

Our approach is a top-down coloring algorithm on the scaled fractional solution where each color class is a feasible solution. In particular, $\frac32$ times the fractional solution is decomposed into a convex combination of integer solutions. This provides not only an approximation algorithm but also directly proves the integrality gaps for the corresponding covering LPs~\cite{carr2000randomized}. In addition, this technique of top-down coloring differs from all current $\frac32$-approximation algorithms on unweighted TAP and all current algorithms which achieve better than $2$-approximations for special cases of weighted TAP. Since our methods decompose scaled fractional solutions, they also have the potential to extend to give tight integrality gap proofs - we propose some ideas for doing this in Section~\ref{sec:obs}.

\section{Problem Structure}
\label{sec:structure}
In this section, we show that we can restrict our attention to only certain instances of weighted TAP. This structure restricts not only the structure of the links but also the structure of the tree itself.

\begin{theorem}
\label{thm:structure}
Any instance of weighted TAP $(T,c, L)$ can be reduced to a corresponding instance of weighted TAP $(T', c', L')$ of roughly the same size where the tree $T'$ is binary and all the leaves in $L'$ go between two leaves. In addition, every feasible solution to $(T,c,L)$ provides a feasible solution to $(T',c', L')$ of equal cost and vice versa.
\end{theorem}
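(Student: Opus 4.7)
The plan is a two-phase reduction. Phase 1 ensures every link has both endpoints at leaves of the tree, and Phase 2 ensures the resulting tree is binary. For each phase I will exhibit a cost-preserving bijection between feasible solutions of the two instances; composing them yields the theorem, and each phase increases the size by only a constant factor.

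For Phase 1, at every internal node $u$ that is an endpoint of at least one link, I would attach a new pendant leaf $u^\star$ via a new tree edge $e_u$, and redirect every link previously incident at $u$ to have endpoint $u^\star$ at the same cost. The forward direction is straightforward: for a feasible solution $S$ in the original instance, each redirected link still covers every original tree edge it used to cover, since the new cycle in $T'$ properly contains the old cycle. The subtle point is coverage of $e_u$, which is incident to $u^\star$ only through redirected links, so it forces at least one such link into any feasible solution. I expect to handle this with a small auxiliary gadget (for example, a zero-cost dummy link from $u^\star$ to a fixed leaf on the other side of $u$), so that the pendant edges are always coverable for free and the bijection carries through in both directions.

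For Phase 2, I would expand each node $v$ of degree $d\geq 4$ into a path $w_1-w_2-\cdots-w_{d-2}$ of $d-2$ new degree-$3$ nodes, distributing $v$'s neighbors so that $w_1$ inherits two, $w_{d-2}$ inherits two, and each interior $w_i$ inherits exactly one. Because Phase 1 guarantees no link has $v$ as an endpoint, only links that previously crossed $v$ need reinterpretation, and after reassignment they naturally cross the corresponding portion of the new path. The new path edges $(w_i,w_{i+1})$ must then be covered, and I would argue that a grouping of $v$'s neighbors always exists that makes each path edge covered by some existing cross-link, provided the original instance is feasible. This requires a combinatorial argument on the pattern of links incident to $v$'s subtrees, possibly using an Eulerian-style ordering of subtree indices induced by the link endpoints at $v$.

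The principal obstacle, I expect, will be the precise gadget design in Phase 1---making the pendant edges neither too restrictive nor too permissive for the bijection---and proving existence of a suitable neighbor-grouping in Phase 2. Once these are settled, cost preservation and the two-way correspondence between feasible solutions follow from careful but routine bookkeeping.
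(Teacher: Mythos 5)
Your Phase~1 is essentially the paper's core idea (make an internal node into a pendant leaf and use a zero-cost auxiliary link so the new pendant edge is free to cover), but your Phase~2 takes a different and, as written, flawed route. The paper performs a single transformation per internal node $v_0$ with children $v_1,\dots,v_k$: it replaces $v_0$ by a path $v_0' - v_1' - \cdots - v_{k+1}'$ with $v_0$ and each $v_i$ hung as pendants, \emph{and simultaneously} adds one new zero-cost link $\ell_v$ from the new leaf $v_0$ to the new leaf $v_{k+1}'$. That link is forced into every feasible solution of the transformed instance (it is the only link covering $v_k'v_{k+1}'$), and its fundamental cycle covers every one of the newly introduced tree edges. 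The pendant edges $v_i'v_i$ are covered by whatever link covered $v_0v_i$ in the original tree. So the bijection between solutions is immediate and independent of the link structure at $v_0$.

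Your Phase~2, by contrast, tries to split a degree-$d$ node into a path of degree-3 nodes \emph{without} adding any dummy link, and instead asks for an ordering of $v$'s neighbors so that each new path edge is crossed by some existing link. The gap is that the partition of $v$'s neighbors into consecutive groups along the path must be fixed once and for all, but which tree edges in the path are coverable depends on \emph{which} feasible solution you started from. Concretely, take $v$ of degree~4 with neighbors $n_1,\dots,n_4$ and suppose links are available exactly between the $n_1$-side and $n_2$-side and between the $n_3$-side and $n_4$-side, and also between the $n_1$-side and $n_3$-side and between the $n_2$-side and $n_4$-side. A feasible solution may pick the first pair of links, or the second. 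Placing $\{n_1,n_3\}$ at $w_1$ and $\{n_2,n_4\}$ at $w_2$ makes the edge $w_1w_2$ coverable by the first pair but uncoverable by the second, and vice versa for the split $\{n_1,n_2\}$ versus $\{n_3,n_4\}$; no fixed split accommodates both. So the ``cost-preserving bijection between feasible solutions'' you promise cannot exist under Phase~2 as stated. Even the weaker claim (existence of a split compatible with \emph{some} optimal solution) would require knowing the optimum. The paper's zero-cost spanning link $\ell_v$ is precisely what renders the new spine edges harmless for \emph{every} solution and eliminates the need for any such ordering argument; you should fold a similar gadget into your Phase~2 (or merge the two phases into the paper's single transformation).
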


\begin{figure}
\centering
\begin{tikzpicture}
\node[circle, fill=black,thick, inner sep=2pt, minimum size=0.1cm, label=left:$v_0$](v) at (0,2) {};
\node[circle, fill=black,thick, inner sep=2pt, minimum size=0.1cm, label=left:$v_1$](c1) at (-1,1) {};
\node[circle, fill=black,thick, inner sep=2pt, minimum size=0.1cm, label=left:$v_2$](c2) at (0,1) {};
\node[circle, fill=black,thick, inner sep=2pt, minimum size=0.1cm, label=left:$v_3$](c3) at (1,1) {};
\draw[thick] (0, 2.5) -- (v) -- (c1) -- (-1.5, .5);
\draw[thick] (c1) -- (-.5, .5);
\draw[thick] (v) -- (c2) -- (0, .5);
\draw[thick] (v) -- (c3) -- (.6, .5);
\draw[thick] (1,.5) -- (c3) -- (1.4, .5);

\node[circle, fill=black,thick, inner sep=2pt, minimum size=0.1cm, label=left:$v_0'$](v0a) at (7,4) {};
\node[circle, fill=black,thick, inner sep=2pt, minimum size=0.1cm, label=left:$v_0$](v0b) at (6,3) {};
\node[circle, fill=black,thick, inner sep=2pt, minimum size=0.1cm, label=left:$v_1'$](v1a) at (8,3) {};
\node[circle, fill=black,thick, inner sep=2pt, minimum size=0.1cm, label=left:$v_1$](v1b) at (7,2) {};
\node[circle, fill=black,thick, inner sep=2pt, minimum size=0.1cm, label=left:$v_2'$](v2a) at (9,2) {};
\node[circle, fill=black,thick, inner sep=2pt, minimum size=0.1cm, label=left:$v_2$](v2b) at (8,1) {};
\node[circle, fill=black,thick, inner sep=2pt, minimum size=0.1cm, label=left:$v_3'$](v3a) at (10,1) {};
\node[circle, fill=black,thick, inner sep=2pt, minimum size=0.1cm, label=left:$v_3$](v3b) at (9,0) {};
\node[circle, fill=black,thick, inner sep=2pt, minimum size=0.1cm, label=left:$v_4'$](vb) at (11, 0) {};

\draw[thick] (7,4.5)--(v0a)--(v1a)--(v2a)--(v3a)--(vb);
\draw[thick] (v0a)--(v0b);
\draw[thick] (v1a)--(v1b);
\draw[thick] (6.5,1.5)--(v1b) --(7.5, 1.5);
\draw[thick] (v2a)--(v2b) -- (8,.5);
\draw[thick] (v3a)--(v3b)--(9,-.5);
\draw[thick] (8.6, -.5)--(v3b) --(9.4, -.5);
\draw[thick, dashed] (v0b) to[out=-90,in=-135] (vb);
\end{tikzpicture}
\caption{An example of a $v$ node with three children before and after the transformation.}
\label{fig:bin}
\end{figure}
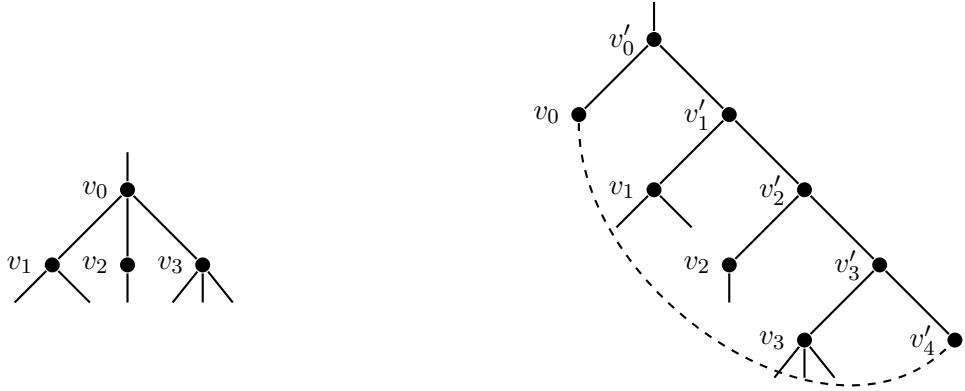

The construction is a local operation performed on all the nodes of $T$ in a top-down fashion. Let $v_0$ be a node in the tree with children $v_1, v_2, \dots v_k$ (if $v_0$ is a leaf then no operation will be done). Let $(T,c,L)$ be the initial tree, the transformation on $v$ will give us a new instance $(T_v, c_v, L_v)$. We will add dummy nodes $v_i'$ for $v$ and all its children and a dummy nodes $v_{k+1}'$ for $v$. We remove the edges $X=\{v_0v_i\}_i$ and add the edges $Y=\{v_iv_i'\}_i \cup \{v_i'v_{i+1}'\}_i$.  We leave all the existing links at their corresponding nodes. The only link we add is a link called $\ell_v$ from $v_0$ to $v_{k+1}'$ of cost $0$. The new instance has changed as follows:
\begin{align*}
V(T_v)&=V(T)\cup \{v_i'\}_i\\
E(T_v)&=E(T)-X + Y\\
L_v&=L\cup \{v_0v_{k+1}'\}
\end{align*}
Figure~\ref{fig:bin} gives an example of this transformation on a node with three children.

We will now show that performing this transformation on every non-leaf vertex of $T$ produces an instance of TAP with a binary tree and leaf-to-leaf links with corresponding feasible solutions to the original problem.

\begin{proof}
First we observe that this transformation adds nodes $v_0', v_1', \dots v_k'$ all of degree $3$, adds node $v_{k+1}'$ of degree 1, and node $v_0$ ends with degree $1$. The transformation also keeps the degree of $v_1, v_2\dots v_k$ unchanged. Once this transformation has been applied to all non-leaves of $T$ then the resulting tree $T'$ will have only nodes of degree $1$ and $3$; giving a binary tree as desired.

Now observe that every original node is a leaf in $T'$. The only links we added were $\ell_v$ which have the form $v_{k+1}'$ to $v_0$ where $v_{k+1}'$ is also a leaf under the transformation. The resulting set of links $L'$ is leaf-to-leaf.

We will now consider any feasible solution $A$ to $(T,c, L)$. Let $A'=A\cup\{\ell_v\}_{v\in V}$. The cost of $A$ and $A'$ are the same as we added only links $\ell_v$ which were given cost $0$. First observe that $\ell_v$ covers all the edges of the form $v_{i}'v_{i+1}'$ and $v_0'v_0$. Now consider an edge $v_i'v_i$ after the transformation. There is some link $\ell\in A$ which covers $v_0v_i$ in $T$ and now that same link must cover $v_i'v_i$ in $T'$. So, $A'$ is a valid solution to $(T', c', L')$ of the same cost.

Now let $A'$ be a feasible solution $(T', c', L')$ now consider there is a vertex $v_0$ which was not initially a leaf node in $T$. It must be the case that $A'$ contains $\ell_v$ as this is the only link in $L'$ which covers $v_k'v_{k+1}'$. So, let $A=A'-\{\ell_v\}_{v\in V}$. Now by the same argument as previously, as $A'$ is a feasible solution for $T'$ and the only edges in $T'$ not in $T$ are those covered by the $\ell_v$ then $A$ is a valid solution to $(T,c,L)$. Notice that $A$ and $A'$ have the same cost as we only removed links of cost $0$ from the solution.
\end{proof}

\section{Large Links}
\label{sec:large}
The main result of this section is the following theorem.

\begin{theorem}
\label{thm:largelinksTwo}
Given a solution $x$ to the EDGE-LP with $x_\ell \geq \alpha$ when $x_\ell >0$ and $m$ is the number of non-zero links then there exists integer solutions $x^1, x^2, \dots x^{2m}$ and $\lambda_1,\dots, \lambda_{2m}$ such that:
\[
\frac{2}{1+\alpha} x \leq \sum_{i=1}^{2m} \lambda_i x^i
\]
and this convex combination can be found in strongly polynomial time.
\end{theorem}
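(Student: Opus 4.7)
The plan is to construct a Carr--Vempala style convex decomposition of $y := \frac{2}{1+\alpha}x$ by assigning each nonzero link $\ell$ an arc $I_\ell$ of length $y_\ell$ on the unit circle $[0,1)$ (assuming $x_\ell \leq 1$ without loss of generality, which is fine for a covering LP) and reading off integer solutions by sweeping a point around the circle. Concretely, if the arcs are placed so that $\bigcup_{\ell \in \delta(e)} I_\ell = [0,1)$ for every tree edge $e$, then for each $p \in [0,1)$ the characteristic vector $x^p := \chi(\{\ell : p \in I_\ell\})$ is a feasible integer TAP solution; as $p$ traverses the circle, $x^p$ changes only at the $2m$ arc endpoints, yielding the required $2m$ integer solutions $x^1,\ldots,x^{2m}$ with weights $\lambda_i$ equal to the lengths of the constant pieces, satisfying $\sum_i \lambda_i = 1$ and $\sum_i \lambda_i x^i = y = \frac{2}{1+\alpha}x$ coordinate-wise.

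By Theorem~\ref{thm:structure}, I may assume $T$ is binary with leaf-to-leaf links; I root $T$ at an internal node and place arcs in BFS order down the tree. At an internal node $v$ with parent edge $e_v$ and children $v_1, v_2$, I partition $\delta(e_v) = A \cup B$ according to which child subtree of $v$ contains the descendant endpoint, and let $C$ be the ``cross'' links with one endpoint in each child subtree of $v$; then $\delta(e_{v_1}) = A \cup C$ and $\delta(e_{v_2}) = B \cup C$. The $A$ and $B$ arcs are already placed from the step at $v$'s parent, with $A \cup B$ covering $[0,1)$, and I must place new $C$-arcs so that both $A \cup C$ and $B \cup C$ cover $[0,1)$; equivalently, the two disjoint gap sets $\overline{U_A} := [0,1) \setminus \bigcup_{\ell \in A} I_\ell$ and $\overline{U_B}$ must be covered by $C$-arcs. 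The root is the base case, handled symmetrically by partitioning its links into three classes and arranging them so that every pair of classes covers the circle, which is possible because the three LP constraints at its three incident edges sum to at least $\tfrac{6}{1+\alpha}$.

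The main obstacle is showing that this local step is always possible. The key inequality is $y(C) \geq |\overline{U_A}| + |\overline{U_B}|$, which I obtain by combining the LP constraints $y(A) + y(C), y(B) + y(C) \geq \tfrac{2}{1+\alpha}$ with the link lower bound $y_\ell \geq \tfrac{2\alpha}{1+\alpha}$ in a short case analysis on whether $A$ or $B$ is empty and whether $y(A), y(B) < 1$; the link lower bound is invoked precisely here, the inequality is tight when $y(A) = y(B) = \tfrac{2\alpha}{1+\alpha}$, and this is what forces the ratio $\tfrac{2}{1+\alpha}$. Beyond the counting, the geometric placement must lay out the arcs so that the gaps stay simple enough (ideally, a single arc each) at every step, which I will ensure through an inductive invariant maintained by placing the $A$- and $B$-arcs as contiguous blocks at the parent step whenever possible; arguing that this invariant can be simultaneously preserved for all classes as the coloring descends the tree is the most delicate part of the proof. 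Strong polynomiality follows since every local step is an explicit arrangement on $O(m)$ arcs.
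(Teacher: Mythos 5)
Your approach is genuinely different from the paper's, but it has a real gap that you yourself flag and that I do not think can be patched without a substantial new idea.

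The paper's Algorithm~\ref{alg:color} is a discrete top-down coloring on $k\beta x$: it processes links in decreasing order of LCA depth, and for each link $\ell$ it alternately assigns a copy of $\ell$ the first color missing from $L_\ell$ and then the first color missing from $R_\ell$. The correctness argument is a pure counting argument: since all previously colored links have LCA at or above $\ell$'s, any color missing from the highest not-yet-full edge $f \in L_\ell$ is also missing from every edge below $f$ on $L_\ell$; hence each pair of colors handed to $\ell$ contributes at least one fresh color to every edge $\ell$ covers, and the first $\alpha\beta k$ copies through an edge all contribute fresh colors. Summing $\alpha\beta k + \tfrac{1-\alpha}{2}\beta k = k$ finishes it. Crucially, a link's $k\beta x_\ell$ copies may be assigned \emph{any} set of colors — they need not be contiguous.

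Your arc-sweeping scheme imposes exactly the constraint the paper's scheme avoids: each link is assigned a single \emph{contiguous} arc $I_\ell$, placed once, at the step for $\mathrm{LCA}(\ell)$. Two problems follow. First, your budget inequality $y(C) \geq |\overline{U_A}| + |\overline{U_B}|$ only recovers $|\overline{U_A}| = 1 - y(A)$ when the $A$-arcs are pairwise non-overlapping; you never establish this, and it is not automatic, because the $A$-arcs at node $v$ were placed at several different ancestor steps (under different class partitions) and can overlap once re-aggregated. Second, and more seriously, even when the measure bound $y(C) \geq |\overline{U_A}| + |\overline{U_B}|$ holds, the $C$-arcs are contiguous, so they can fill $\overline{U_A}$ and $\overline{U_B}$ only if these gap sets are structured as few intervals positioned compatibly with how the $C$-mass is fragmented across individual links. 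When you descend from $v$ to $v_1$, the set $\delta(e_{v_1}) = A \cup C$ is re-partitioned into $A'$ and $B'$ according to the grandchild subtrees, and there is no reason that $A'$ — an arbitrary subset of the already-placed $A$- and $C$-arcs — sits in a contiguous block; in general $\overline{U_{A'}}$ will consist of several separated intervals. You acknowledge this as ``the most delicate part,'' but you give no mechanism that prevents fragmentation from compounding as you descend, and I do not see one: the invariant you want (``gaps are single arcs'') is exactly the kind of statement that fails to propagate through arbitrary re-partitioning. The paper's coloring sidesteps this entirely by letting a link take a non-contiguous color set and by phrasing progress as a count of fresh colors rather than a geometric covering condition. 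If you want to pursue a Carr–Vempala style argument here, you would need either to drop contiguity (which removes the $2m$-piece sweeping structure you rely on) or to find a much stronger invariant that provably survives the re-partitioning at every internal node; as written, the proof is not complete.
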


This gives an alternative proof of Cheriyan, Jord\'an and Ravi~\cite{cheriyan1999}. In particular, it gives a $\frac43$-approximation when we start with a fractional solution where all non-zero links have weight at least $\frac12$.

\subsection{Algorithm}
We will be working with a tree rooted at an arbitrary node, $r$. The least common ancestor (LCA) of a link, is the least common ancestor of its endpoints. We let $L_\ell$ ($R_\ell$) be the path in the tree from the LCA of $\ell$ to the left (right) endpoint of $\ell$ (this path could be empty).

Given a fractional solution, $x$ let $\alpha = \min_{\ell: x_\ell \neq 0} x_\ell$, and let $\beta = \frac{2}{1+\alpha}$. Let $k$ be the smallest integer such that $k\beta x$ is an even integer for all entries. In order to find our convex decomposition in the algorithm below, we will decompose $k \beta x$ into $k$ different color such that each color is a feasible tree augmentation.

The main idea of how the algorithm works is that it goes down the tree looking at links which have their LCA at the current node and colors all the copies of each link with different colors so as to help cover the edges as much as possible with new colors. This guarantees that the first $\alpha \beta k$ links (copies of one link) which are colored through an edge all get distinct colors. Afterward, we only guarantee that of the remaining links that cover an edge half of them give a new color to that edge.

\begin{algorithm}[h]
\KwData{$T$ a tree, $x$ LP solution, $\beta$ approximation factor, $k$ colors}
\KwResult{Decomposition of $k \beta x$ into $k$ different colors where each color is a feasible tree augmentation}
Make $k\beta x_\ell$ copies of each link $\ell$\;
\While{some link is not colored}{
$\ell$ has the highest LCA among uncolored links\;
	\While{not all copies of $\ell$ colored}{
	Color a copy of $\ell$ with the first color not present on $L_\ell$\;
	\If{all edges of $L_\ell$ are covered by all $k$ colors}{
	Color a copy of $\ell$ with any color not already on a copy of $\ell$\;
	}
	Color a copy of $\ell$ with the first color not present on $R_\ell$\;
	\If{all edges of $R_\ell$ are covered by all $k$ colors}{
	Color a copy of $\ell$ with any color not already on a copy of  $\ell$\;
	}
	}
}

 \caption{The simple coloring algorithm}
\label{alg:color}
\end{algorithm}

We will now show that this coloring does indeed give us a convex combination as desired.

\begin{theorem}
\label{thm:alg}
Algorithm~\ref{alg:color} guarantees that every edge is covered by a link in every one of the $k$ colors.
\end{theorem}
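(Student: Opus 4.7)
The plan is to fix an arbitrary tree edge $e$ and track the set of colors that end up appearing on $e$, arguing that it has cardinality at least $k$. Root the tree, and list the links covering $e$ in the order the algorithm processes them as $\ell_1,\ldots,\ell_t$; by feasibility of $x$ their weights satisfy $\sum_{i=1}^{t} x_{\ell_i}\geq 1$. For each $i$, edge $e$ lies on exactly one of $L_{\ell_i}$ or $R_{\ell_i}$; denote this side by $S_i$. The ``$S_i$-greedy'' pick in the inner loop is the driver of the argument, since a color chosen by this rule is absent from every edge of $S_i$ and in particular from $e$.

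I first analyze $\ell_1$. When the algorithm reaches $\ell_1$, no color has yet touched $e$, because $\ell_1$ is the first processed link whose path passes through $e$. The inner loop colors all $k\beta x_{\ell_1}\geq k\beta\alpha$ copies of $\ell_1$; within each iteration the $S_1$-greedy rule picks a color not present on $S_1$, hence not on $e$, and consecutive picks are distinct since the freshly chosen color is now present on $S_1$. Therefore every $S_1$-greedy copy deposits a brand-new color on $e$. If at some point the ``all edges of $S_1$ covered by all $k$ colors'' branch fires, then $e$ is already fully colored and we are finished with $e$. Thus $\ell_1$ alone contributes at least $\min(k,\,k\beta x_{\ell_1})$ distinct colors on $e$.

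Next, for each $i\geq 2$ the inner loop colors at least two copies of $\ell_i$ per iteration, and the $S_i$-greedy copy among them uses a color not present on $S_i$, in particular not on $e$. Hence at least half the $k\beta x_{\ell_i}$ copies of $\ell_i$ contribute new colors to $e$, i.e., at least $k\beta x_{\ell_i}/2$ of them. Combining the two estimates and assuming $e$ is not already done after $\ell_1$:
\[
\#\{\text{colors on }e\} \;\geq\; k\beta x_{\ell_1}+\frac{k\beta}{2}\sum_{i=2}^{t} x_{\ell_i} \;\geq\; \frac{k\beta}{2}\bigl(1+x_{\ell_1}\bigr) \;\geq\; \frac{k\beta(1+\alpha)}{2} \;=\; k,
\]
using $\sum_{i=1}^{t}x_{\ell_i}\geq 1$ in the second inequality, $x_{\ell_1}\geq\alpha$ in the third, and $\beta=\tfrac{2}{1+\alpha}$ at the end.

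The main obstacle I anticipate is formally verifying that the $S_i$-greedy rule is well-defined whenever $e$ still lacks some color: one must argue that as long as any edge of $S_i$ (in particular $e$) is not yet fully colored, there exists a color absent from the entire side $S_i$, so that the algorithm invokes the ``any color not already on a copy of $\ell$'' fallback only when the full side is saturated. Establishing this invariant inductively with the top-down processing order, and being careful about the per-iteration accounting of ``greedy'' versus ``extra'' copies on each side, is where the bulk of the bookkeeping will lie.
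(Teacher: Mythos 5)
Your proof follows the same route as the paper's: isolate the first link $\ell_1$ covering $e$, credit it with many distinct colors, credit each later link $\ell_i$ at rate $\frac12$, and combine via $\sum_i x_{\ell_i}\ge 1$; the closing arithmetic is the paper's computation written with $x_{\ell_1}$ in place of its lower bound $\alpha$.

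There is, however, a concrete gap in your count for $\ell_1$. You only analyze the $S_1$-greedy pick in each pass of the inner loop, concluding that every $S_1$-greedy copy deposits a brand-new color on $e$ --- but each pass colors \emph{two} copies of $\ell_1$ (one per side), so the number of $S_1$-greedy copies is $k\beta x_{\ell_1}/2$, not the $k\beta x_{\ell_1}$ you feed into the final sum. This factor of two is not cosmetic: with only the justified quantity the bound becomes $\frac{k\beta}{2}\sum_i x_{\ell_i}\ge \frac{k\beta}{2}=\frac{k}{1+\alpha}<k$. To recover the full $k\beta x_{\ell_1}$, you need the observation the paper makes explicitly: \emph{every} color given to any copy of a fixed link $\ell$ is distinct from the others, because each choice is either absent from some edge of $L_\ell\cup R_\ell$ (and hence not yet a color of $\ell$, whose existing colors blanket that whole path) or is explicitly taken to be ``any color not already on a copy of $\ell$.'' Since $e$ carries no color when $\ell_1$ is processed, all $k\beta x_{\ell_1}$ of its copies --- not just the $S_1$-side ones --- deposit new colors on $e$.

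The second issue is one you flagged yourself: the claim that a color chosen on the $S_i$-side of $\ell_i$ is actually missing from $e$ rests on a monotonicity invariant of the top-down order, namely that at the moment $\ell_i$ is processed every already-colored link covering $e$ also covers each edge of $S_i$ above $e$, so the colors on $e$ are a subset of the colors on any such higher edge. The paper establishes this in one sentence (``we have only colored links whose LCA is above $f$''); it must actually be stated and used rather than deferred as future bookkeeping, since the rate-$\frac12$ claim for $i\ge 2$ is vacuous without it.
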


\begin{proof}
For a given $e$ without all $k$ colors, every time a link through $e$ receives a pair of colors, then one of those colors is new to $e$. Let us consider some link $\ell$ through $e$. Each inner while loop of the algorithm gives two colors to copies of $\ell$. One of the two paths $L_\ell, R_\ell$ must contain $e$; without loss of generality let $e\in L_\ell$. Consider the highest edge $f\in L_\ell$ without all $k$ colors. If $f$ is missing a color $c$, then $e$ must also be missing color $c$. We have only colored links whose LCA is above $f$, therefore any link with a color which covers $e$ must also cover $f$.  So, for each pair of colors chosen for a link through $e$, at least one of them is a new color for $e$. In other words, half of the time a link covering $e$ gets colored, it is a new color for $e$.

The first time a link through an edge $e$ is colored, then all its colors are distinct (unless $\beta x_\ell >1$). For a given link $\ell$, every time a color is picked for a copy of $\ell$ it has to be a color not on one of the edges $\ell$ covers or a color not on any copy of $\ell$. If $\beta x_\ell >1$, then we can color the copies of $\ell$ with all $k$ colors and all the edges which $\ell$ covers will be covered by all $k$ colors. In this case, $e$ would get all $k$ colors.

Thus the first time an edge has one of its links colored it receives at least $\alpha \beta k$ distinct colors. Combining this with the fact that every edge gets colors at rate $\frac{1}{2}$ subsequently, the total number of colors $e$ receives in this process is at least
\[
\alpha\beta k +\frac{1-\alpha}{2}{\beta} k = \frac{1+\alpha}{2}\beta k = k.
\]
\end{proof}

Now we will show how this implies Theorem~\ref{thm:largelinksTwo}.

\begin{proof}[Proof of Theorem~\ref{thm:largelinksTwo}]
By scaling our $x$ up by $k\beta$ we can write this scaled version as the sum of $k$ different feasible colors (integer solutions). This gives us that:
\[
k\beta x = \sum_{i=1}^k x^i
\]
where the $x^i$ are integer solutions. Dividing by $k$ gives the desired result.

Algorithm~\ref{alg:color} does not need to first multiply by $\beta k$ before being run. The algorithm can be run by just multiplying the solution by $\beta$. As the algorithm runs, it will keep track of a convex combination of integer partial solutions. In each while loop when a link $\ell$ is added, $\ell$ will be fully added to some integer partial solutions and added to a fraction of at most two partial integer solutions (one for $R_\ell$ and one for $L_\ell$). This creates at most two more integer partial solutions. The number of different integer solutions at the end can be bounded by $2m$ where $m$ is the number of non-zero links. This guarantees this algorithm can be run in strongly polynomial time.
\end{proof}

\subsection{Conjecture}

Theorem~\ref{thm:largelinksTwo} deals with the case when $x$ does not have fractional parts which are very small. In particular, the case where $\alpha = \frac{1}{3}$ gives a $\frac{3}{2}$ approximation with this algorithm. Another approach to this problem would be to iteratively round when a solution has a link with fractional value at least $\frac{2}{3}$ (See e.g., ~\cite{lrs}).

In particular, when a fractional solution has $x_\ell\geq \frac23$ we can immediately round up $x_\ell$ to $1$ and resolve the linear program with this added constraint. This approach combined with using our approximation when $x_\ell\geq \frac13$ for all $\ell$ would achieve a $\frac32$ approximation as every individual link gets rounded up by at most $\frac32$ and the cost of the residual LPs do not increase in the process.

By combining these two approaches one would be able to provide a $\frac{3}{2}$ approximation to weighted TAP. It would be very convenient if every fractional solution had one of these two properties: a link $\ell$ with $x_\ell\geq 2/3$, or $x_\ell\geq 1/3$ for all non-zero $x_\ell$. Unfortunately, there exists extreme points of the EDGE-LP which satisfy neither of these properties as shown in Section~\ref{sec:obs}. Therefore, we propose the following conjecture.
\begin{conjecture}
\label{con:onethird}
Every extreme point solution $x^*$ to the ODD-LP has one of the two properties: $x^*_\ell \geq 1/3$ for all non-zero $x^*_\ell$ or there is some $\ell$ such that $x^*_\ell \geq 2/3$.
\end{conjecture}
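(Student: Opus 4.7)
The plan is to argue by contradiction via the extreme-point characterization of the ODD-LP. By Theorem~\ref{thm:structure} we may assume $T$ is binary with every link joining two leaves. Suppose $x^*$ is an extreme point violating the conjecture, so $\max_\ell x^*_\ell < 2/3$ while some link $\ell_0$ has $0 < x^*_{\ell_0} < 1/3$. Let $P \subseteq E$ be the support of $x^*$. Extremality gives a family $\mathcal{F}$ of linearly independent tight odd-set constraints with $|\mathcal{F}| = |P|$, so every argument will revolve around bounding $|\mathcal{F}|$ against $|P|$ once a structural description of the tight family is available.

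First I would prove an uncrossing lemma for $\mathcal{F}$. The difficulty compared with the standard supermodular/cut uncrossing is that if $|S|,|S'|$ are both odd, then $|S\cap S'|$ and $|S\cup S'|$ can be either parity; moreover, the LHS of the odd-set constraint is only weakly supermodular because the $\sum_{e\in\delta(S)\cap T} x(\delta(e))$ term double-counts across $S,S'$ in a controlled way. I would split into cases by the parity of $|S\cap S'|$: when $|S\cap S'|$ is even, exactly one of $|S\cup S'|$, $|S\triangle S'|$ is odd, and I would try to show that tightness of the $(S,S')$ pair forces tightness of an odd combination, while the remaining even-set inequality is valid as a $\{0,1/2\}$-CG cut of EDGE-LP (which is how Fiorini et al.~\cite{fiorini2017frac} derive ODD-LP in the first place). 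The desired output is a cross-free (laminar modulo complementation) subfamily $\mathcal{F}'\subseteq\mathcal{F}$ with $|\mathcal{F}'| = |P|$, together with a tree-like parent/child structure on $\mathcal{F}'$ induced by the nesting.

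With a cross-free tight family in hand, I would run a token/charging argument in the style of iterative rounding. Each nonzero link $\ell$ distributes a fixed number of tokens among the tight odd sets it crosses; the budget is chosen so that (i) every tight odd constraint has RHS $|\delta(S)\cap T|+1 \geq 2$, so with all $x^*_\ell < 2/3$ each constraint absorbs fractional contribution from at least three links, and (ii) the presence of $\ell_0$ with $x^*_{\ell_0} < 1/3$ forces the tight odd sets containing $\ell_0$ to absorb even more. A laminar counting identity of the form $\sum_{S \in \mathcal{F}'}(\text{contribution to } S) = \sum_{\ell\in P}(\text{tokens from } \ell)$ should then yield $|\mathcal{F}'| < |P|$, contradicting extremality.

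The main obstacle is the uncrossing step: ODD-LP constraints do not fit the classical weakly-supermodular framework of Jain, and a naive attempt to uncross odd sets is destroyed by parity. A viable fallback, using the binary leaf-to-leaf structure guaranteed by Theorem~\ref{thm:structure}, is to catalogue which odd $S$ can ever carry a tight ODD-LP constraint in such an instance (this amounts to understanding how $\delta(S)\cap T$ interacts with LCAs of support links) and to argue that every such $S$ reduces, after subtracting edge and node constraints, to a tight constraint at a single internal node; laminarity then comes for free from the tree. If neither route closes, the companion experiment is to search with an LP solver for small extreme points that have $\min_\ell x^*_\ell \in (0,1/3)$ and $\max_\ell x^*_\ell \in (1/3,2/3)$ on the gadget-expanded instances of Section~\ref{sec:obs}, which would either produce a counterexample or sharpen the conjecture by identifying the correct threshold pair.
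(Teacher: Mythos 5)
This statement is labeled a \emph{conjecture} in the paper, and the paper offers no proof of it. The only evidence the authors give is computational: they report exhaustively checking all extreme points of ODD-LP on binary trees with at most 12 nodes, and they cite Fiorini et al.~\cite{fiorini2017frac} for the fact that ODD-LP is integral in the no-in-link special case as circumstantial support. So there is no ``paper's own proof'' to compare you against; the honest comparison is between your sketch and the status of an open problem.

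Your proposal is a reasonable plan of attack, and you have correctly identified the central obstacle, but you have not closed it, and you say so yourself. The uncrossing step is genuinely the crux: the odd-set constraints of ODD-LP are $\{0,\tfrac12\}$-CG cuts, and the left-hand side $x(\delta(S)) + \sum_{e\in\delta(S)\cap T} x(\delta(e))$ is not submodular or supermodular in $S$ in any of the standard senses, so the classical Jain-style uncrossing machinery does not apply directly. The parity issue you raise is real --- for odd $S, S'$ the pairs $(S\cap S', S\cup S')$ and $(S\setminus S', S'\setminus S)$ can land you back on even sets whose constraints are not part of the LP --- and your suggestion to route through the even-set CG cuts is plausible but unverified. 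Your fallback of cataloguing which odd $S$ can be tight in a binary leaf-to-leaf instance and arguing that each reduces to a single-node constraint is also unproven; note that the paper's Section~\ref{sec:obs} shows a gadget expansion under which NODE-LP collapses to EDGE-LP, and EDGE-LP has extreme points violating the conjecture (Figure~\ref{fig:counterexample}), so any argument that reduces tight odd-set constraints to node constraints must genuinely use the \emph{full} odd family, not just the stars around degree-3 nodes. Until the uncrossing or cataloguing lemma is actually established, what you have is a research program, not a proof, which matches the paper's own assessment that this is an open conjecture.
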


\section{Deficient Paths}
\label{sec:deficient}
\iffalse
 One simple linear program which describes this problem is the Edge Cover LP (EDGE-LP).
We can strengthen this linear program by adding constraints for every set of three edges in the tree which are incomparable. \begin{definition}
Three edges $e_1, e_2, e_3$ in $T$ are \emph{incomparable} if there is no path in $T$ through all three edges.
\end{definition}
In particular, this means there is no edge $f$ in $\delta(e_1) \cap \delta(e_2) \cap \delta(e_3)$ when $e_1, e_2, e_3$ are incomparable. Therefore, in an integer solution, there must be two links from $\delta(e_1) \cup \delta(e_2) \cup \delta(e_3)$. This gives the following strengthened linear program we call TRI-LP:
\begin{align}
\min \sum_{e\in E} &c(e)x_e & \nonumber \\
\text{s.t.} x(\delta(e)) &\geq 1&\quad \forall e\in T \nonumber \\
 x(\delta(e_1) \cup \delta(e_2) \cup \delta(e_3)) &\geq 2&\quad \forall e_1, e_2, e_3\in T \text{ incomparable}\label{eq:tri} \\
x_f & \geq 0 &\quad \forall f\in E \nonumber\\
\end{align}
For the rest of this paper, we will assume we have a feasible solution $x$ to TRI-LP.
\fi
In this section, we will start with a solution to NODE-LP and use the additional structure from the constraints~\ref{eq:node} to help us. Due to our previous observation as we will assume that the TAP instance is a binary tree with all links going from leaf-to-leaf. We will break edges into two groups depending on how much coverage they receive.

\begin{definition}
An edge $e\in T$ is considered \emph{deficient} if $x(\delta(e)) < 4/3$ and \emph{abundant} if $x(\delta(e)) \geq 4/3$.
\end{definition}
The deficient edges in a solution to the NODE-LP can not be too dense; this would violate the node constraint~\ref{eq:node}.
\begin{lemma}
The deficient edges form paths in $T$.
\end{lemma}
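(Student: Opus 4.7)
The plan is to show that no internal node of $T$ can have all three of its incident edges deficient; combined with the fact that every leaf has only one incident edge, this forces the deficient subgraph to have maximum degree at most $2$, and since $T$ is acyclic, a subgraph of a tree with maximum degree $2$ is a disjoint union of paths.

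So the task reduces to the following claim: for any internal node $v$ with incident tree edges $e_1, e_2, e_3$, at least one of $x(\delta(e_1)), x(\delta(e_2)), x(\delta(e_3))$ is $\geq 4/3$. To establish this, I would exploit the fact that the tree is binary and all links are leaf-to-leaf, which was already justified by Theorem~\ref{thm:structure}. Removing $v$ from $T$ decomposes $T$ into three subtrees $T_1, T_2, T_3$ (one for each of $e_1,e_2,e_3$). Any leaf-to-leaf link has its two endpoints either in the same subtree $T_i$ (in which case it covers none of $e_1,e_2,e_3$) or in two different subtrees $T_i$ and $T_j$ (in which case the tree path between its endpoints passes through $v$ and crosses exactly $e_i$ and $e_j$, not $e_k$). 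This is the key structural observation.

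Consequently every link in $\delta(e_1) \cup \delta(e_2) \cup \delta(e_3)$ contributes to exactly two of the three sets $\delta(e_i)$. By double counting,
\[
x(\delta(e_1)) + x(\delta(e_2)) + x(\delta(e_3)) \;=\; 2 \cdot x\bigl(\delta(e_1) \cup \delta(e_2) \cup \delta(e_3)\bigr) \;\geq\; 2 \cdot 2 \;=\; 4,
\]
where the inequality is precisely the node constraint \eqref{eq:node} at $v$. If all three edges were deficient, the left-hand side would be strictly less than $3 \cdot \tfrac43 = 4$, a contradiction. Hence at most two of $e_1, e_2, e_3$ are deficient, i.e., $v$ has at most two deficient incident edges.

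The main thing to be careful about is the counting step — specifically, ensuring that no link counted in the union is counted just once (which would weaken the identity to an inequality) or three times (which can't happen because a leaf-to-leaf link path can cross $v$ at most once, so it uses exactly two of the three edges at $v$). The rest is just combining the per-node bound with the observation that maximum-degree-$2$ subgraphs of a tree are disjoint unions of paths to conclude that the deficient edges form paths in $T$, as claimed.
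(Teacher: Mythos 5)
Your proof is correct and takes essentially the same approach as the paper: both argue that no internal node can have all three incident edges deficient by observing that, since the tree is binary and links are leaf-to-leaf, every link covering one of the three incident edges covers exactly two of them, and then combining this double-counting identity with the node constraint to derive a contradiction. Your version is just a bit more explicit about the subtree decomposition justifying the "exactly two" claim and about why degree-at-most-$2$ subgraphs of a tree are disjoint paths.
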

\begin{proof}
Suppose there was a node, $v$, adjacent to three deficient edges: $e_1, e_2, e_3$. By the node inequality~\ref{eq:node}, we know that:
\[x(\delta(e_1) \cup \delta(e_2) \cup \delta(e_3)) \geq 2\]
In this particular case, every link through $v$ goes through exactly $2$ of $e_1, e_2, e_3$. So, we have:
\[x(\delta(e_1) \cup \delta(e_2) \cup \delta(e_3))\leq \frac{1}{2}(x(\delta(e_1)) +x(\delta(e_2))+x(\delta(e_3))) <2 \]
This is a contradiction to the feasibility of $x$ for the NODE-LP. So, there is no node with three deficient edges in a solution to the NODE-LP, and the deficient edges form paths as desired.
\end{proof}

\subsection{A Top-Down Greedy 2-approximation and Ramifications}
In this section, we present a simple $2$-approximation which will be used to deal with the abundant edges in future cases. There are numerous $2$-approximations for TAP, but we will use a specific coloring one as it allows us to extend colorings.

Choose any vertex $r$ to be the root. Let $k$ be the smallest non-negative integer such that $kx_\ell$ is an integer for all links $\ell$. For this approach, we will multiply our fractional solution by $4k$ and then break it up into $2k$ integral solutions. The cost of the cheapest such solution will be at most $4k/2k=2$ times the cost of the original.

We will be using LCA($\ell$), $R_\ell$ and $L_\ell$ as defined in the previous section.

\begin{algorithm}
\KwIn{Tree $T$, root $r$, feasible solution $x$ to EDGE-LP, least common multiple $k$}
\KwOut{Breaks $4kx$ into $2k$ colors each of which is a solution}
\For {Links $\ell$} {
	Break the $4kx_\ell$ into $2kx_\ell$ copies of $R_\ell$ and $2kx(f)$ copies of $L_\ell$\;
}
\While{Not all $2k$ colors are solutions} {
	Let $e$ be the highest edge without all $2k$ colors\;
	Choose an uncolored link $\ell$ in $x(\delta(e))$\;
	Choose a color $c_i$ not on $e$\;
	Color $\ell$ with $c_i$\;
}
Transfer the colors of $L_\ell,R_\ell$ back to $\ell$\;
\caption{Greedily colors the links representing of the EDGE-LP top-down to give $2k$ solutions.}
\label{alg:greedy}
\end{algorithm}

The top-down algorithm is given in~\ref{alg:greedy}. The main idea is to double each link and use one copy to cover the left path from its lca and the other for its right path. In this sense, it is reminiscent of the approach of Frederickson and J\'aJ\'a~\cite{frederickson1981approximation} of splitting each cross link in the tree to two up links to devise a 2-approximation algorithm. The main idea of the coloring algorithm is to only supply colors to links that are missing at one of the edges it covers. Since the links are colored top down, this ensures that any color missing at an edge is also missing in all its descendant edges.

\begin{lemma}\label{lem:greedy}
The $2k$ colors returned by~\ref{alg:greedy} are valid solutions.
\end{lemma}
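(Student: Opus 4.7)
The plan is to show that at termination every tree edge is covered by at least one half-link of each of the $2k$ colors; after the final transfer step, this translates to every color class being a feasible TAP solution, since a colored half-link of a link $\ell$ places $\ell$ into that color class and $\ell$ covers every edge on the half-link's path.

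The argument rests on two invariants. First, I would establish the \emph{upward-closure invariant}: for every color $c$, the set $S_c$ of edges already carrying $c$ is closed under taking ancestors in the rooted tree. The proof is by induction on coloring events. When half-link $H$ (covering the downward path from LCA$(\ell)$ to an endpoint of link $\ell$) is colored $c$ while processing edge $e^*$, the ancestors of any edge newly added to $S_c$ are either further up on $H$'s downward path (themselves newly added) or strict ancestors of LCA$(\ell)$. The latter are ancestors of $e^*$ as well, and are hence already in $S_c$ by the inductive hypothesis combined with the top-down processing order---the highest edge missing any color is always selected, so all strict ancestors of $e^*$ have all $2k$ colors. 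The contrapositive says the set of edges missing $c$ is closed under taking descendants.

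Second, using the upward-closure invariant, I would prove a \emph{counting invariant}: whenever $e$ is still active (missing at least one color), the number of colored half-links in $\delta(e)$ equals the number of distinct colors currently on $e$. At each coloring event of a half-link $H \in \delta(e)$ with color $c$, three cases arise for the processed edge $e^*$: (i) $e = e^*$, in which case $c$ was chosen missing from $e$ by the algorithm; (ii) $e$ is a strict descendant of $e^*$ on $H$'s path, in which case downward-closure of missing colors forces $e$ to also be missing $c$; (iii) $e$ is a strict ancestor of $e^*$, which cannot occur for active $e$ since such ancestors already have all colors by processing order. In cases (i) and (ii), both the half-link count and the color count at $e$ increase by one in lockstep, preserving the invariant.

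Combining these: since $|\delta(e)| = 2kx(\delta(e)) \geq 2k$ by the EDGE-LP constraint, and the counting invariant yields at most $2k-1$ colored half-links in $\delta(e)$ whenever $e$ is active, an uncolored half-link is always available when the algorithm selects $e$ for processing. The color count of the processed edge strictly increases per iteration, so the algorithm terminates with every edge fully colored by all $2k$ colors. The main obstacle is ruling out the possibility that two colored half-links in $\delta(e)$ share a color---a wasted coloring that could starve edges with tight LP constraint $x(\delta(e)) = 1$---and exactly this is eliminated by combining top-down processing with upward closure.
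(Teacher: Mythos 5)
Your proposal is correct and follows essentially the same approach as the paper's proof: the paper's observation that ``the colors covering every edge are a subset of those covering its parent'' is your upward-closure invariant, and the paper's observation that ``every time $e$ gets one of its $2k$ links colored, it gets a new color'' is your counting invariant, both driven by top-down processing of the highest uncolored edge. Your write-up is somewhat more formal (explicit case analysis by the ancestor/descendant relation between $e$ and $e^*$), but the ideas and the final counting $|\delta(e)| \geq 2k$ match the paper's argument.
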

\begin{proof}
Consider any $e\in T$. When the algorithm starts, there are at least $4k$ links which cover $e$, because $x(\delta(e)) \geq 1$. After the transformation of the links, there are at least $2k$ edges which cover $e$.

As the algorithm progresses, the colors covering every edge are a subset of those covering its parent. Let $p$ be the parent of $e$. The first time we color a link through $e$ that is not through $p$, then we must have given $p$ all $2k$ colors already.

Every time a link through $e$ gets a color, it is because some edge $e'$ above $e$ was missing that color. By the above observation, the colors missing from $e'$ are also missing from $e$. Therefore, $e$ also got a new color. Hence, every time $e$ gets one of it's $2k$ links colored, it gets a new color.

Every edge is covered by all $2k$ colors, so every color is a solution as desired.
\end{proof}
The correctness of the algorithm implies that taking the cheapest color (in terms of total link cost) is a valid solution, leading to the following result.

\begin{corollary}
There is a greedy top-down coloring based $2$-approximation for TAP.
\end{corollary}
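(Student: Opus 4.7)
The plan is a one-line averaging argument on top of Lemma~\ref{lem:greedy}. Starting from an optimal fractional solution $x^*$ to EDGE-LP of cost $c^\top x^* \le \mathrm{OPT}$, I would invoke Algorithm~\ref{alg:greedy} to decompose $4k x^*$ into $2k$ integer feasible TAP solutions $x^{(1)},\dots,x^{(2k)}$ with $\sum_{i=1}^{2k} x^{(i)} = 4k x^*$, where this identity is understood after transferring the colors from the $L_\ell$ and $R_\ell$ copies back to the original link $\ell$. Taking inner product with $c$ yields $\sum_{i=1}^{2k} c^\top x^{(i)} = 4k\, c^\top x^*$, so by averaging some color class $i^*$ satisfies
\[
c^\top x^{(i^*)} \le \tfrac{4k}{2k}\,c^\top x^* = 2\, c^\top x^* \le 2\,\mathrm{OPT},
\]
and outputting this cheapest color class is the desired $2$-approximation.

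The one bookkeeping step I would verify carefully is that the cost accounting respects the splitting and subsequent merging of links. Each of the $4k x^*_\ell$ copies of $\ell$ (split into $2k x^*_\ell$ left-path copies and $2k x^*_\ell$ right-path copies) receives exactly one of the $2k$ colors, so the number of copies assigned to color $i$ is exactly the integer value $x^{(i)}_\ell$, and summing $c_\ell$ over all copies of $\ell$ contributes $4k\, c_\ell\, x^*_\ell$ in total, confirming the identity $\sum_i c^\top x^{(i)} = 4k\, c^\top x^*$ used above. Feasibility of each $x^{(i)}$ is exactly what Lemma~\ref{lem:greedy} established.

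The main (indeed only) obstacle is the running time: the integer $k$ from the algorithm description is in general exponential in the bit-length of the input, so one cannot literally instantiate $4k$ copies of each link. I would handle this exactly as in the proof of Theorem~\ref{thm:largelinksTwo}: maintain each color class implicitly as a convex combination during the top-down sweep, creating at most a polynomial number of distinct partial-integer solutions as links are processed in decreasing order of LCA depth, and finally report the minimum-cost color class. This makes the algorithm a strongly polynomial, greedy, top-down coloring $2$-approximation for TAP, establishing the corollary.
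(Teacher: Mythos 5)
Your proof is correct and takes essentially the same approach as the paper: the corollary is stated immediately after Lemma~\ref{lem:greedy} with only the observation that the cheapest of the $2k$ color classes is a feasible solution of cost at most $\frac{4k}{2k} = 2$ times the LP cost. Your additional care about polynomial running time (avoiding the explicit $4k$ copies) goes slightly beyond what the paper writes down here, but it is the same implicit-convex-combination trick the paper itself invokes for Theorem~\ref{thm:largelinksTwo}.
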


When none of the edges are deficient, then we can push this result even further.
\begin{corollary}
Given a solution $x$ to TAP with no deficient edges, $3kx$ can be decomposed into $2k$ feasible colors.
\end{corollary}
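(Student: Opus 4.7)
The plan is to reuse Algorithm~\ref{alg:greedy} almost verbatim, but with the scaling factor lowered from $4k$ to $3k$. I would pick $k$ large enough that $3kx_\ell$ is a positive even integer for every link $\ell$ with $x_\ell>0$, and split each such link into $\frac{3k}{2}x_\ell$ copies of $L_\ell$ and $\frac{3k}{2}x_\ell$ copies of $R_\ell$. Then run the top-down greedy coloring loop aiming for $2k$ colors: at each iteration find the highest tree edge missing a color, grab an uncolored link-copy through it, and paint it with a color absent at that edge.

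The only quantitative change from Lemma~\ref{lem:greedy} is in the count of link-copies covering a given tree edge $e$. After the split, that count is $\sum_{\ell\in\delta(e)}\frac{3k}{2}x_\ell=\frac{3k}{2}\,x(\delta(e))$, where the factor $\frac{3k}{2}$ rather than $3k$ arises because only the leg of $\ell$ containing $e$ covers $e$. Since no edge is deficient, $x(\delta(e))\ge 4/3$, so $e$ is covered by at least $\frac{3k}{2}\cdot\frac{4}{3}=2k$ link-copies, exactly matching the number of available colors.

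From here the correctness argument of Lemma~\ref{lem:greedy} carries over verbatim: top-down processing maintains the invariant that every color present at a tree edge is already present at its parent edge, so each coloring at the currently highest uncovered edge $e$ introduces a color genuinely new to $e$, and since $e$ starts with at least $2k$ covering link-copies the process eventually delivers all $2k$ colors to $e$. Any remaining uncolored copies (at abundant edges) can be distributed arbitrarily among the color classes without damaging feasibility, yielding the desired decomposition $3kx=\sum_{i=1}^{2k}x^i$ into $2k$ feasible integer solutions. The only subtlety is the integrality of $\frac{3k}{2}x_\ell$, which is handled by picking $k$ as a sufficiently divisible common multiple of the denominators of $x$ (doubling $k$ if necessary); otherwise the argument is purely an edge-count tightening of the proof of Lemma~\ref{lem:greedy}, and I do not anticipate any new structural difficulty.
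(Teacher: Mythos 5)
Your proposal matches the paper's proof exactly: split each of the $3k x_\ell$ copies of a link into two halves of $\tfrac{3k}{2}x_\ell$ copies each (one for $L_\ell$, one for $R_\ell$), run the same top-down greedy coloring for $2k$ colors, and observe that abundance $x(\delta(e))\geq \tfrac{4}{3}$ guarantees $\tfrac{3k}{2}x(\delta(e))\geq 2k$ link-copies at every edge. The remark about choosing $k$ so that $\tfrac{3k}{2}x_\ell$ is integral is a reasonable technical clarification that the paper leaves implicit.
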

\begin{proof}
We can re-use Algorithm~\ref{alg:greedy}  and its proof. The only thing we have to change is that we break $3kx_\ell$ into two parts of size $\frac32kx_\ell$. Since all edges are abundant, $x(\delta(e)) \geq 4/3$ and so after the split, every edge $e$ has $2k$ links covering it.
\end{proof}

We will strengthen this further to allow us to finish off the abundant parts after we deal with the deficient parts of the tree in later proofs.
\begin{definition}
A rooted subtree is considered \emph{abundant} if all its edges are abundant.
\end{definition}
\begin{definition}
A partial coloring of $3kx$ causes a \emph{conflict} if there is an edge $e$ which is covered by three links of the same color $c$ and $e$ does not yet have all $2k$ colors covering it. A partial coloring is considered \emph{conflict-free} if it causes no conflicts.
\end{definition}

In particular, we show that given the start of the coloring we can finish it off if we didn't do too much wrong.
\begin{theorem}
\label{thm:abundSubtree}
Given a partial conflict-free coloring of $3kx$ on some links through the root of an abundant subtree, it can be extended to cover all the edges in the subtree.
\end{theorem}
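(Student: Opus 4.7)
The plan is to extend the partial coloring by running the top-down coloring procedure of Algorithm~\ref{alg:greedy} on the abundant subtree $T'$ rooted at $r$, starting from the partial coloring as the initial state. As in Algorithm~\ref{alg:greedy}, I first split each link $\ell$'s $3kx_\ell$ copies into $\frac{3k}{2}x_\ell$ copies of $L_\ell$ and $\frac{3k}{2}x_\ell$ copies of $R_\ell$, so that each half-copy covers only a chain of ancestor-descendant edges. I then iteratively find the highest uncolored edge $e$ in $T'$, pick an uncolored half-copy covering $e$, and assign to it a color missing from $e$, until every edge of $T'$ is complete.

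The first step of the proof is to verify that the subset invariant used for Lemma~\ref{lem:greedy} (for each edge $f \in T'$ with parent edge $f_p$ in $T'$, the color set on $f$ is a subset of the color set on $f_p$) holds at initialization. Because the partial coloring is supported only on half-copies of links through $r$, each such half-copy's chain within $T'$ begins at an edge incident to $r$ and descends monotonically to a leaf. Consequently any pre-colored half-copy that covers $f$ also covers every ancestor of $f$ in $T'$, so its contributed color appears on $f$ and on every such ancestor. Maintenance of the invariant under each extension step follows the same argument as in Lemma~\ref{lem:greedy}: when we color a half-copy covering the current highest uncolored edge $e$ with a color $c$ missing from $e$, every proper ancestor of $e$ in $T'$ is already complete (hence already contains $c$), while every descendant of $e$ in the copy's chain has its color set contained in that of $e$, so $c$ is new there as well. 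Thus every extension step adds exactly one new color to $e$.

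The second step is a counting argument showing enough uncolored half-copies remain on every edge. For an abundant edge $e$, there are $N(e) := \tfrac{3k}{2} x(\delta(e)) \geq 2k$ half-copies covering $e$; the conflict-free hypothesis bounds the number of pre-colored copies on $e$ by $2|C_0(e)|$, where $|C_0(e)|$ is the number of distinct pre-colored colors on $e$. Combined with the monotonicity property that $|C_0(e)|$ is nonincreasing along any root-to-leaf path in $T'$ (since pre-colored chains only enter $T'$ through its root), the bookkeeping yields at least $2k - |C_0(e)|$ uncolored half-copies on $e$, which is exactly the number of missing colors to add.

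The main obstacle will be closing the counting in the tight case where $x(\delta(e)) = 4/3$ exactly and $|C_0(e)| > 0$, in which the crude use of conflict-freeness alone appears to leave a deficit of up to $|C_0(e)|$ half-copies. To handle this, I would use the chain-monotonicity of the pre-colored support together with the observation that any pre-colored chain truncated within $T'$ effectively releases uncolored half-copies onto descendant edges, and combine this with an inductive accounting along root-to-leaf paths in $T'$ to recover the required slack.
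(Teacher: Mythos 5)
Your setup matches the paper's: run the greedy top-down coloring of Algorithm~\ref{alg:greedy} on the subtree rooted at $r$, split each link's $3k x_\ell$ copies into $\tfrac{3k}{2}x_\ell$ copies on each side, and invoke the subset/monotonicity invariant to argue each extension step contributes a genuinely new color. The invariant check at initialization is fine. However, you correctly identify in your final paragraph that the crude count only gives $2k - 2|C_0(e)|$ uncolored half-slots at an edge $e$, leaving a deficit of up to $|C_0(e)|$, and your proposed fix does not close it.

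The paper's proof closes this deficit with a different, one-line idea that your sketch is missing: when two precolored copies with the same color $c$ both cover $e$, reassign one of them (for accounting purposes) to be a half-copy on the \emph{other} side of its link, i.e.\ the side not through $e$. That copy still physically covers $e$ and still supplies color $c$, but it no longer consumes one of $e$'s half-slots. After this reassignment, the precolored copies consume at most $|C_0(e)|$ half-slots at $e$ (one per distinct precolored color), so $\geq 2k - |C_0(e)|$ uncolored half-slots remain — exactly the number of missing colors. Your alternative — that "truncated chains release uncolored half-copies onto descendant edges" — fails in the tight case: take an edge $e$ deep in $T'$ with $x(\delta(e)) = 4/3$ and $|C_0(e)| = k$ precolored colors, each appearing on two precolored links whose chains both run from $r$ all the way through $e$. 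Nothing truncates above $e$, no slack is released, and the naive split leaves zero uncolored half-slots at $e$ while $e$ still needs $k$ more colors. The reassignment trick (or something equivalent) is therefore essential, and without it the argument has a genuine gap.
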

\begin{proof}
We can simply start the greedy algorithm at the root and finish every edge off.
Every edge in the subtree has at least $4k$ links covering it originally. For every color $c$ through an edge $e$, if there are two copies of that color, then $e$ can pretend that one link with color $c$ was originally given to the side of the cycle formed by the link not through $e$.
\end{proof}

\subsection{One Deficient Path}
We now extend the greedy coloring algorithm to show that if the deficient edges only form one path in the tree, then there exists a $\frac32$ approximation.

Consider we have a solution $x$ to NODE-LP with only one deficient path $P$; let $u_1, u_2, \dots u_j$ be the deficient path. We will deal with this case by first coloring some links such that every edge in $P$ gets all $2k$ colors. Then we will split up all the uncolored links that go through $P$. We root the tree at $u_1$, then we will use Theorem~\ref{thm:abundSubtree} to finish all the abundant subtrees.

\begin{algorithm}
\KwIn{Tree $T$ with one deficient path $P=u_1u_2\dots u_j$, feasible solution $x$ to NODE-LP, least common multiple $k$}
\KwOut{Breaks $3kx$ into $2k$ colors which cover $P$ and is conflict-free at all abundant edges}
\For{ $u_iu_{i+1}$ an edge in $P$} {
	\For{ Color $c$ not covering $u_iu_{i+1}$} {
		Pick an uncolored link, $\ell$, through $u_iu_{i+1}$\;
		Color $\ell$ with color $c$\;
	}
}
\While{There is some $u_iu_{i+1}$ with at least three links of color $c$}{
	Let the three links through $u_iu_{i+1}$ of color $c$ be $\ell_1,\ell_2,\ell_3$\;
	With respect to the edges in $P$, let $\ell_1$ cover only a subset of the edges covered by $\ell_2$ and $\ell_3$\; (At least one such labeling exists by making $\ell_2$ and $\ell_3$ the two links with the furthest coverage in the two sides of $u_iu_{i+1}$ respectively)
	Uncolor $\ell_1$\;
}
\caption{Greedily colors the links to give $2k$ solutions that cover the path. Also, it avoids overcoloring the links through abundant edges and results in a conflict-free coloring.}
\label{alg:onePath}
\end{algorithm}

\begin{theorem}
Algorithm~\ref{alg:onePath} provides all $2k$ colors to all the edges in $P$ and is conflict-free.
\end{theorem}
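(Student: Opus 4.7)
The plan is to verify two assertions about the output of Algorithm~\ref{alg:onePath}: (a) every edge of $P$ ends up covered by link copies in all $2k$ colors, and (b) the final partial coloring is conflict-free in the sense that no edge missing some color carries three copies of any single color.

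For (a), I would show that the first for-loop (Phase~1) never gets stuck, by processing the edges of $P$ top-down and maintaining, for each unprocessed $P$-edge $e$, the invariant that the number of uncolored copies through $e$ is at least the number of colors missing at $e$. Initially every $P$-edge has $3kx(\delta(e))\ge 3k$ copies through it and at most $2k$ missing colors, giving a slack of $k$; a careful accounting, using that each link's $P$-coverage is a contiguous sub-path, shows the invariant survives every step in which some other $P$-edge consumes a copy through $e$. When $e$ is itself processed, any remaining missing colors are patched one at a time, after which $e$ carries all $2k$ colors and retains them for the rest of Phase~1.

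For the second while-loop, termination is immediate because each iteration strictly decreases the nonnegative-integer total number of colored copies. The choice of $\ell_2$ and $\ell_3$ as the copies with the farthest-left and farthest-right $P$-coverage among the three at $u_iu_{i+1}$ ensures that $\ell_1$'s $P$-coverage lies in the union of $\ell_2$'s and $\ell_3$'s $P$-coverages. Hence every $P$-edge previously carrying color $c$ via $\ell_1$ still carries it via $\ell_2$ or $\ell_3$, so (a) is preserved. At termination no $P$-edge carries three copies of any color; combined with (a), this rules out $P$-edges being in conflict.

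The hardest step is the conflict-free property at abundant edges $e'$. Since every colored copy is placed in Phase~1 for some $P$-edge, every colored link through $e'$ must also cover at least one $P$-edge. For an abundant edge $e'$ whose off-$P$ branch leaves $P$ at vertex $u_r$, I classify such links by the location of their LCA: Type~A, with LCA $u_r$ and $P$-coverage of the form $[u_ru_{r+1},\dots,u_au_{a+1}]$ for some $a\ge r$, and Type~B, with LCA $u_p$ for $p<r$ and $P$-coverage of the form $[u_pu_{p+1},\dots,u_{r-1}u_r]$. Any two Type-A links share $u_ru_{r+1}$ and any two Type-B links share $u_{r-1}u_r$; so if three same-colored links through $e'$ contain three of the same type, the shared $P$-edge already carries three copies of that color, contradicting Phase~2 termination. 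The main obstacle is to close the remaining mixed configurations (two Type-A plus one Type-B, or vice versa) and to nail down the Phase~1 counting invariant; both rely crucially on the binary-tree structure guaranteed by Theorem~\ref{thm:structure} and the interval property of $P$-coverages, and likely require exploiting the abundance inequality $x(\delta(e'))\ge 4/3$ to show $e'$ has already received all $2k$ colors whenever such a mixed triple appears.
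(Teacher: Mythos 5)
Your plan for Phase~1 (the first for-loop) is sound and essentially matches the paper's: the paper maintains the slightly stronger invariant that, until a $P$-edge has all $2k$ colors, the colored copies through it carry \emph{distinct} colors, and derives from contiguity of $P$-coverage that a later colored copy cannot hand an unfinished edge a repeat color. Your weaker ``uncolored $\ge$ missing'' invariant is implied by this and also suffices; the contiguity step you gesture at is exactly the crux, so you would still need to write it out. Your treatment of the clean-up while-loop (termination, and that uncoloring $\ell_1$ preserves coverage of $P$ because its $P$-coverage is contained in $\ell_2\cup\ell_3$) matches the paper's.

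The real difficulty is the conflict-free claim at abundant edges, and there you have correctly located the hard part but not resolved it. Your Type~A/Type~B split by LCA is a genuine and useful observation — in fact you can strengthen it: two same-color Type~B copies through a given off-$P$ branch cannot both survive Phase~1, because their $P$-coverages are \emph{nested} intervals ending at $u_{r-1}u_r$, and the outer one already gives color $c$ to every edge the inner one could have been colored for. So ``three of the same type'' is never the issue; only ``two Type~A plus one Type~B'' (or the symmetric case with one Type~A and two Type~B when Type~B copies come from different branches — but for a fixed branch it is two A plus one B) can occur. You explicitly leave this mixed configuration open and speculate that the abundance inequality will save you, but you give no argument, and it is not clear one exists along those lines: the two Type~A copies share the $P$-edge $u_ru_{r+1}$, the Type~B copy shares $u_{r-1}u_r$, and nothing in the clean-up rule forces any of the three to be uncolored if those two $P$-edges each carry only two copies of color $c$. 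The paper's own justification here is the one-line assertion that ``the clean-up phase guarantees that for all the rooted subtree, the coloring induces at most two links of that color,'' which is stated without proof and does not obviously follow from the stopping condition of the while-loop (which only inspects $P$-edges, not edges such as $u_rw$). So the gap you flag is real, and your proposal as written does not close it; but you have in fact made the obstruction more explicit than the paper does, and to finish you would need either an additional argument showing the 2A+1B pattern forces a third color-$c$ copy onto $u_ru_{r+1}$ (triggering clean-up to delete the inner Type~A copy), or an argument that whenever the pattern survives, $u_rw$ already carries all $2k$ colors.
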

\begin{proof}
As the algorithm progresses, an edge in $P$ either has all $2k$ colors or all of the colored links covering it are distinct colors. If a edge $e$ in $P$ were to receive a duplicate color $c$ before all $2k$ colors, then there were two edges $e_1, e_2$ that both needed $c$. Without loss of generality let $e_2$ lie between $e$ and $e_1$. When $e_1$ takes color $c$ on a link that also covers $e$, then the link must cover $e_2$. This contradicts that $e_2$ would need color $c$. Therefore, every edge in $P$ gets all $2k$ colors in the first part of the iteration.

The clean-up phase does not remove any colors from edges in $P$. A link $\ell$ of color $c$ that becomes uncolored, is uncolored only if all the edges in $P$ it covers have color $c$ from other links. So, the clean-up phase never removes any colors from edges of $P$.

The clean-up phase guarantees that for all the rooted subtree, the coloring induces at most two links of that color and therefore causes no conflicts.
\end{proof}

Now by combining the previous theorem, and Theorem~\ref{thm:abundSubtree} there is a $\frac32$ approximation when there is  only one deficient path.

\begin{corollary}
Given a solution $x$ for NODE-LP on tree $T$ which induces at most one deficient path, there is an integral solution of cost at most $\frac32$ the cost of $x$.
\end{corollary}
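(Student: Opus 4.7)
The plan is to pipeline Algorithm~\ref{alg:onePath} with repeated applications of Theorem~\ref{thm:abundSubtree} so as to decompose $3kx$ into $2k$ feasible integer solutions, from which selecting the cheapest color class yields the claimed $\tfrac{3}{2}$-approximation.

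First I would root $T$ at $u_1$, the top endpoint of the unique deficient path $P = u_1 u_2 \cdots u_j$. Deleting the edges of $P$ decomposes the remainder of $T$ into rooted abundant subtrees $S_1, \ldots, S_p$, each rooted at some $u_i$; since $T$ is binary, there is at most one such subtree at each internal $u_i$ and at most two at the path endpoints $u_1$ and $u_j$. Because $P$ is the unique deficient path, every edge of every $S_\alpha$ is abundant.

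Next I would run Algorithm~\ref{alg:onePath} on the scaled solution $3kx$. By the theorem proved immediately above, its output is a partial coloring that gives every edge of $P$ all $2k$ colors and is globally conflict-free. Then I would process the abundant subtrees one at a time: for each $S_\alpha$, the restriction of the current partial coloring to the links whose fundamental cycles enter $S_\alpha$ through its root edge constitutes a conflict-free partial coloring on links through the root of $S_\alpha$, so Theorem~\ref{thm:abundSubtree} extends it to one that covers every edge of $S_\alpha$ by all $2k$ colors. After processing every subtree, each of the $2k$ color classes covers all of $T$ and hence is a feasible augmentation, so the cheapest color class has cost at most $\tfrac{1}{2k}\,c^{\top}(3kx) = \tfrac{3}{2}\,c^{\top}x$, as required.

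The hard part will be bookkeeping the cross-subtree interactions. A single link may have endpoints in two different abundant subtrees $S_\alpha$ and $S_\beta$, and coloring it while extending $S_\alpha$ forces that color onto certain edges of $S_\beta$ before $S_\beta$ is ever processed. I would need to verify that such induced colorings never create a conflict inside $S_\beta$ and that the partial coloring $S_\beta$ inherits still satisfies the hypothesis of Theorem~\ref{thm:abundSubtree}. Since the greedy extension inside Theorem~\ref{thm:abundSubtree} only uses colors currently missing on the topmost uncovered edge of $S_\alpha$, and missing colors propagate downward from ancestor edges in any rooted subtree (exactly as exploited in Lemma~\ref{lem:greedy}), an induction on the order in which subtrees are processed should preserve the global conflict-free invariant. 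Making this invariant airtight---specifically, ruling out the bad case where extensions done in other subtrees pile three copies of one color onto an edge of $S_\beta$ that still lacks a color---is the step I would spend the most care on.
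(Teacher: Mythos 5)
Your overall pipeline is the same as the paper's: root at $u_1$, run Algorithm~\ref{alg:onePath} to give every edge of $P$ all $2k$ colors while keeping the coloring conflict-free, and then invoke Theorem~\ref{thm:abundSubtree} on each abundant subtree hanging off $P$. However, the difficulty you flag at the end---a link with endpoints in two different abundant subtrees $S_\alpha$ and $S_\beta$ dragging a color into $S_\beta$ as a side effect of extending $S_\alpha$---is a genuine gap in your argument, and the induction you sketch does not obviously close it. The ``missing colors propagate downward'' property of Lemma~\ref{lem:greedy} controls what happens on descendant edges \emph{inside} $S_\alpha$; it says nothing about the edges of $S_\beta$ on the other half of the link's fundamental cycle, which is exactly where a third copy of a single color could accumulate on an edge still missing some color.

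The paper's resolution, stated just before Algorithm~\ref{alg:onePath}, is to split each uncolored link $\ell$ through $P$ into its two halves $L_\ell$ and $R_\ell$ at its LCA, exactly as in the greedy Algorithm~\ref{alg:greedy}. After splitting, each half-link lives entirely inside one rooted abundant subtree, so the extensions produced by Theorem~\ref{thm:abundSubtree} on different subtrees are literally disjoint and cannot spill colors into one another; the abundance condition $x(\delta(e))\ge 4/3$ guarantees that each abundant edge still retains at least $2k$ covering half-links after the split, so Theorem~\ref{thm:abundSubtree} remains applicable. This splitting is the missing mechanism that turns ``finish all the abundant subtrees'' into a list of decoupled applications of Theorem~\ref{thm:abundSubtree} rather than the global bookkeeping argument you would otherwise have to carry out. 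Incorporating the split closes the gap; without it your proof is incomplete.
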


\subsection{Two Deficient Paths}
We've shown how to deal with a single deficient path. To extend our approach to two deficient paths, we need to deal with the abundant path of tree edges which connects the two deficient paths in the binary tree . The goal is to color links to cover the two deficient paths and the abundant path between them, and maintain that every remaining abundant subtree doesn't receive too many copies of each color. We will prove the following in this section:

\begin{theorem}
\label{thm:twoDefPaths}
Given a solution $x$ on tree $T$ which induces at most two deficient paths, there is an integral solution of cost at most $\frac32$ the cost of $x$.
\end{theorem}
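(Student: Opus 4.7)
The plan is to extend Algorithm~\ref{alg:onePath} to two deficient paths by coloring a ``spine'' $S$ that contains both deficient paths together with the abundant tree path connecting them, so that every edge of $S$ receives all $2k$ colors, and then invoking Theorem~\ref{thm:abundSubtree} on each abundant subtree hanging off $S$ to complete the decomposition.

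Let $P_1, P_2$ be the two deficient paths, let $Q$ be the unique tree path joining them, and set $S = P_1 \cup Q \cup P_2$. Root $T$ at the topmost vertex of $S$; depending on whether $P_1$ is an ancestor segment of $P_2$ along some root-to-leaf path or the two paths lie in separate subtrees of their LCA, $S$ is either a single descending path or the union of two descending paths meeting at the root. Scale $x$ by $3k$, so that each deficient edge has at least $3k$ links in its cut and each abundant edge at least $4k$. In Phase~1, process the edges of $S$ in top-down order: for each edge $e \in S$ and each color $c$ missing at $e$, color one uncolored link through $e$ with $c$, exactly as in Algorithm~\ref{alg:onePath}. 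By the same argument as in the one-path case, a ``distinct colors'' invariant is preserved at each edge of $S$ prior to processing it, so if $a$ links through $e$ are already colored (and all with distinct colors), the $2k-a$ missing colors can be assigned from the $3k-a$ remaining uncolored links through $e$. In Phase~2, run the cleanup: while some edge of $S$ carries three links of a common color $c$, identify a link whose coverage of $S$ is dominated by the union of the other two links' coverages and uncolor it. Phase~3 then invokes Theorem~\ref{thm:abundSubtree} on each abundant subtree hanging off $S$, which by Phase~2 starts in a conflict-free state. The resulting coloring decomposes $3kx$ into $2k$ feasible tree augmentations, and the cheapest of these costs at most $\tfrac{3}{2}$ times $x$, giving Theorem~\ref{thm:twoDefPaths}.

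The main obstacle is the configuration where $S$ consists of two descending branches meeting at the root, because a link whose endpoints lie in opposite subtrees of the root covers $S$ in a ``V-shaped'' union of two sub-paths rather than a single interval. Two subtleties emerge from this. First, the ``distinct colors'' invariant of the one-path argument can fail: a cross-branch link colored at an ancestor in one branch may deposit a duplicate color on an edge of the other branch that was already colored via a different link. Maintaining feasibility of Phase~1 therefore requires a more careful processing order (for instance, finishing one branch entirely before starting the other and restricting Phase~1 coloring to links that do not cross branches when possible), while still preserving the counting bound. Second, the dominance argument of Phase~2 must be upgraded: among any three links of the same color through a spine edge $e$, one must find a link whose V-shaped spine coverage is contained in the union of the other two's, which requires a case analysis based on where each link's endpoints lie relative to the branch point. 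The node-disjointness of $P_1, P_2$ and the NODE-LP constraint~\eqref{eq:node} at the branch vertex are the natural tools for this case analysis. Once both subtleties are resolved, the rest of the argument mirrors the one-path analysis and yields the claimed $\tfrac{3}{2}$-approximation.
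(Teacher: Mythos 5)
Your high-level plan---color a spine $S = P_1 \cup Q \cup P_2$ so every spine edge gets all $2k$ colors, keep the coloring conflict-free, then invoke Theorem~\ref{thm:abundSubtree}---is the same shape as the paper's argument, but the core technical work is missing, and at least one of the steps you assert would actually fail.

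First, the spine is not ``a single descending path or the union of two descending paths meeting at the root.'' When $Q$ attaches to the interior of $P_1$ and to the interior of $P_2$, the spine $S$ has \emph{two} degree-$3$ branch vertices (at the two junctions) and four leaves, i.e.\ an H-shape, regardless of which spine vertex you root at. Your entire case analysis is predicated on a V-shaped spine with one branch point, so it does not cover the generic configuration.

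Second, and more seriously, the cleanup step you propose in Phase~2 does not go through. You assert that among any three same-colored links through a spine edge $e$, one has spine coverage contained in the union of the other two's. At a branch vertex $v$ of $S$ this is false: write $e=uv$; a link through $e$ enters one of the two other spine branches at $v$ and some prefix of the $u$-side. Take $\ell_1$ deep into branch A but shallow on the $u$-side, $\ell_3$ shallow into branch A but deep on the $u$-side, and $\ell_2$ into branch B and shallow on the $u$-side. Then none of the three is dominated by the union of the other two (each covers a spine edge the others do not), so no link can be safely uncolored. The ``case analysis based on where each link's endpoints lie relative to the branch point'' you gesture at would run directly into this configuration, and NODE-LP constraint~\eqref{eq:node} does not rule it out. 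The paper avoids this trap by never running a single top-down pass over the whole H-shaped spine; instead it explicitly \emph{pairs up} $2k$ links through the two deficient edges flanking each end of $Q$ (Algorithm~\ref{alg:pair}) with a swap-based repair step that establishes exactly the nesting structure you would need, then coordinates the two pairings across $Q$ by a careful three-way case split on the number of links that cover all of $Q$ (``long links''), and only afterwards extends colors outward along $P_1$ and $P_2$ using Lemma~\ref{lem:def}. This coordination of long links is what makes the two-path case genuinely harder than one path, and it is absent from your proposal.

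Finally, the Phase~1 fix you sketch---process one branch entirely before the other, preferring links that stay within one branch---is not shown to preserve the counting bound. On a deficient edge of $P_1$ the fractional coverage can be arbitrarily close to $1$, so after scaling you have barely $3k$ links, some or all of which may be long links whose other endpoint sits past $Q$ in $P_2$. Excluding such links from Phase~1 may leave fewer than $2k$ candidates for a deficient edge, and you give no argument that this cannot happen. In short, the proposal correctly identifies the two obstructions but does not resolve either of them; the paper's pairing-and-long-link machinery is precisely the resolution that is missing here.
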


To deal with the deficient edges and the abundant path between them, we will have to examine the structure of links near deficient edges.

\begin{lemma}
\label{lem:def}
Given a solution $x$ to the NODE-LP and a deficient edge $e=uv$ where $u$ is an internal node, then the total weight of links through $u$ but not $e$ is at least $\frac23$.
\end{lemma}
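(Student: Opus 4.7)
The plan is to exploit the binary structure of the tree together with the node constraint at $u$. Since $u$ is an internal node and the tree is binary, $u$ has exactly three incident tree edges: $e=uv$ and two others, call them $e_1$ and $e_2$. Since we have also reduced to the setting where every link has both its endpoints at leaves (Theorem~\ref{thm:structure}), no link is incident to $u$ itself. Consequently, any link $\ell$ whose tree path touches $u$ must enter $u$ through one of $\{e,e_1,e_2\}$ and leave through exactly one other. This means every link in $\delta(e)\cup\delta(e_1)\cup\delta(e_2)$ covers exactly two of these three edges.

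Using this, I would partition the links through $u$ into three classes according to which pair of incident edges they cover:
\[
A=\delta(e)\cap\delta(e_1),\quad B=\delta(e)\cap\delta(e_2),\quad C=\delta(e_1)\cap\delta(e_2).
\]
By the observation above these are disjoint and their union is $\delta(e)\cup\delta(e_1)\cup\delta(e_2)$; moreover the set of links through $u$ that do not cover $e$ is precisely $C$. Also $\delta(e)=A\cup B$, so $x(\delta(e))=x(A)+x(B)$.

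Now I would apply the node inequality~\eqref{eq:node} at $u$:
\[
x(A)+x(B)+x(C)=x(\delta(e)\cup\delta(e_1)\cup\delta(e_2))\;\geq\;2.
\]
Rewriting this as $x(C)\geq 2-x(\delta(e))$ and combining with the deficiency assumption $x(\delta(e))<4/3$ gives
\[
x(C)\;>\;2-\tfrac{4}{3}\;=\;\tfrac{2}{3},
\]
which is the desired conclusion.

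There is no real obstacle here: once one notes that internal nodes have no incident links (so every link touching $u$ must cross two of its incident edges), the claim reduces to a one-line subtraction between the node constraint and the deficiency bound. The only care point is to justify the partition $A\cup B\cup C$ rigorously, which is immediate from the binary-plus-leaf-to-leaf structure guaranteed by Theorem~\ref{thm:structure}.
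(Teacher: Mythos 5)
Your argument is essentially identical to the paper's: apply the node constraint at $u$ to get total link weight through $u$ at least $2$, then subtract the deficiency bound $x(\delta(e))<4/3$. The explicit $A,B,C$ partition you spell out is implicit in the paper (it states "every link through $v$ goes through exactly two of $e_1,e_2,e_3$" in an earlier lemma), but the substance and the key inequality are the same.
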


\begin{proof}
Let the neighbors of $u$ be $v,w_1,w_2$. The triangle constraint on edges $uv, uw_1, uw_2$ says the total weight of the links that cover these edges is at least $2$. All the links that cover $uv, uw_1,$ or $uw_2$ also go through $u$. So, the total weight of links through $u$ is at least $2$. The weight of links through the deficient edge $uv$ is less than $\frac43$ by definition; this gives the total weight of links through $u$ but not through $uv$ is at least $\frac23$ as desired.
\end{proof}

Consider $x$ is a solution to NODE-LP with two deficient paths; let $P_1, P_2$ be the deficient paths and let $Q=q_1q_2\dots q_j$ be the abundant path which connects them. In order to deal with this case, we will first color the links that form the intersections of $Q$ with $P_1$ and $P_2$. Then, we will finish coloring $Q$. Lastly, we will color $P_1, P_2$. Throughout the whole process we will also guarantee that the coloring is conflict free, i.e., every abundant edge not in $Q$ gets all $2k$ colors or has at most two copies of every color. In order to deal with this coloring, we will have to treat links which cover all of $Q$ differently.

\begin{definition}
A link $\ell$ is considered to be a \emph{long link} if $\ell$ covers all the edges of $Q$. A link $\ell$ which is not a long link, is considered to be a \emph{short link}.
\end{definition}

Let $e_1, e_2$ be the edges of $P_1$ adjacent to $Q$. Let $e_3, e_4$ be the edges of $P_2$ adjacent to $Q$. If $P_1$ or $P_2$ only has one edge adjacent to $Q$ this simplifies the case greatly. We will address this case last. Algorithm~\ref{alg:pair} finds a coloring starting at the ends of $Q$.

\begin{algorithm}
\KwIn{Tree $T$, feasible solution $x$ to NODE-LP, least common multiple $k$, adjacent deficient edges $e_1, e_2$ in $P_1$ between which the abundant path $Q$ to the other deficient path $P_2$ originates}
\KwOut{Pairs up $2k$ links covering $e_1$ and $2k$ links covering $e_2$ in a way that avoids overusing any edge not on $Q$}
\For{ $i=1$ to $2k$}{
	Let $f_i$ be the link through $e_1$ which covers the $i$th most number of edges of $Q$\;
	Let $g_i$ be the link through $e_2$ which covers the $i$th most number of edges of $Q$\;
}
Pair up the edges such that $f_i$ gets paired with $g_{2k+1-i}$\;
\For {$e$ not on $Q$} {
	\While {$f_i,g_j$ are paired and both cover $e$} {
		Choose a second pair $f_{i'}, g_{j'}$ where neither covers $e$ \;
		\If{No such pair $f_{i'}, g_{j'}$ exists}{
			Break from the While loop\;
		}
		Change the pairing so that $f_i, g_{j'}$ are paired and $f_{i'}, g_j$ are paired\;
	}
}
\caption{Takes two adjacent deficient edges $e_1, e_2$ and pairs up $2k$ of the $3kx$ links covering $e_1$ and $e_2$ in a way that avoids any edge not in $Q$ being covered by both links in a pair unless it is covered by all pairs.}
\label{alg:pair}
\end{algorithm}

\begin{lemma}
Algorithm~\ref{alg:pair} finds a pairing such that every edge not on $Q$ is either covered by all $2k$ pairs or is only covered by at most one link in every pair.
\end{lemma}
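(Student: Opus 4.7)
I would first use Lemma~\ref{lem:def} to pin down the structure of the selected $f_i$ and $g_j$. Let $u$ be the branching node of $P_1$ where $Q$ originates: since $u$ is internal to $P_1$ it has three incident tree edges $e_1,e_2,q_1q_2$ partitioning $T\setminus u$ into three subtrees $T_1,T_2,T_Q$. Let $x_{12},x_{1Q},x_{2Q}$ denote the total $x$-weight of links whose endpoints lie in the respective pairs of subtrees. The NODE-LP constraint at $u$ gives $x_{12}+x_{1Q}+x_{2Q}\ge 2$, and deficiency of $e_1,e_2$ gives $x_{12}+x_{1Q}<4/3$ and $x_{12}+x_{2Q}<4/3$; subtracting forces $x_{1Q},x_{2Q}>2/3$ and hence $x_{12}<2/3$. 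In the $3k$-scaled problem this translates to strictly more than $2k$ copies of $T_1$-$T_Q$ links and strictly fewer than $2k$ copies of $T_1$-$T_2$ links, so among $f$-copies (ordered by decreasing $Q$-coverage), the top $2k$ all go from $T_1$ to $T_Q$; symmetrically every selected $g_j$ goes from $T_2$ to $T_Q$.

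For any $e\notin E(Q)$ with $e\in T_1\cup T_2$ the conclusion is immediate. If $e\in T_1$, every selected $g_j$ has its tree-path going $T_2\to u\to T_Q$ and never enters $T_1$, so $g_j$ cannot cover $e$; hence every pair has at most the $f$-side covering $e$, and no swap can ever change this since swaps only rearrange which $g_j$ is matched to which $f_i$. The case $e\in T_2$ is symmetric.

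The remaining case $e\in T_Q\setminus E(Q)$ is handled via monotonicity. Any swap $(f_{i_1},g_{j_1}),(f_{i_2},g_{j_2})\mapsto(f_{i_1},g_{j_2}),(f_{i_2},g_{j_1})$ performed while processing some edge $e'$ changes the number of $e$-bad pairs by $(a_1-a_2)(b_2-b_1)$, where $a_r=[f_{i_r}\in A_e]$ and $b_r=[g_{j_r}\in B_e]$ denote membership in the sets of $f$'s and $g$'s covering $e$. The swap's preconditions enforce $f_{i_1}\in A_{e'}, f_{i_2}\notin A_{e'}$ and $g_{j_1}\in B_{e'}, g_{j_2}\notin B_{e'}$; meanwhile, since every pair of edges $e,e'$ in $T_Q\setminus E(Q)$ lives in branching subtrees of $T_Q$ that are either disjoint (different branching subtrees, or sibling edges within the same one) or nested (one below the other in the same subtree), $A_e$ and $A_{e'}$ (and likewise $B_e$ and $B_{e'}$) are always disjoint or nested; a short case check in every resulting configuration yields $(a_1-a_2)(b_2-b_1)\le 0$. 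Hence the $e$-bad count is monotone non-increasing throughout the rest of the algorithm.

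Consequently the inner while-loop for $e$ terminates either with zero $e$-bad pairs (in which case monotonicity preserves this for the remainder), or with at least one $e$-bad pair but no good-good pair available. In the latter case the goal is to show that $|A_e|=|B_e|=2k$, so that every pair is bad (and $2k$ is also preserved by monotonicity). The main obstacle will be proving this ``all-or-nothing'' dichotomy: I plan to exploit the slice structure coming from the $Q$-coverage ordering, namely that all $f$- and $g$-copies covering a given $e\in T_Q\setminus E(Q)$ share the same $Q$-coverage and therefore form a single contiguous tier of each ordered list, and that the inverse initial pairing $f_i\leftrightarrow g_{2k+1-i}$ together with the swap rule for earlier-processed edges forces ``no good-good pair'' to occur only once these tiers exhaust the whole list. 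Combining the slice argument with the monotonicity above then yields the stated dichotomy at termination.
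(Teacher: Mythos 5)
Your plan has a genuinely different flavor from the paper's proof on the monotonicity step, and it adds a nice structural ingredient the paper leaves implicit. But the last third goes astray on a misreading of the lemma, and that misreading sends you after a claim that is both unnecessary and false.

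\textbf{What is different and works.} Your use of the NODE-LP constraint at the branching node $u$ of $P_1$ to force $x_{1Q},x_{2Q}>2/3$ and $x_{12}<2/3$, and hence that the top $2k$ of the $f$-copies all go $T_1$--$T_Q$ (and symmetrically for $g$), is a clean observation that the paper never states; the paper's configuration check in Figure~\ref{fig:possibilities} uses it only implicitly (e.g.\ ``$f_i$ covers $e_1,e',e$ but not $e_2$''). Your laminar-family reformulation of the swap inequality is likewise a more abstract version of the paper's three-picture case analysis. For it to close you need to note one more fact: because $A_e$ and $B_e$ are both determined by the \emph{same} subtree $S_e$ of $T_Q$ beyond $e$, the relation between $A_e,A_{e'}$ and the relation between $B_e,B_{e'}$ are always of the same type (disjoint, or nested in the same direction); this rules out the mixed cases (e.g.\ $A$-disjoint but $B$-nested), in which the sign of $(a_1-a_2)(b_2-b_1)$ can actually be positive. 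With that remark added, the ``short case check'' does give $(a_1-a_2)(b_2-b_1)\le 0$ in all surviving configurations.

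\textbf{Where it goes wrong.} Read the conclusion again: ``covered by all $2k$ pairs'' means every pair has \emph{at least one} of its two links covering $e$ (i.e.\ the number $u_e$ of pairs with neither link covering $e$ is $0$), not that every pair has \emph{both} links covering $e$. With the correct reading, your ``all-or-nothing dichotomy'' is not an obstacle at all: the inner while-loop for $e$ breaks precisely when no pair $(f_{i'},g_{j'})$ has neither link covering $e$, which is exactly $u_e=0$. So when the loop for $e$ terminates you already have $b_e=0$ or $u_e=0$ for free. The invariant is then preserved because a swap changes $b_{e}$ and $u_{e}$ by the same amount $(a_1-a_2)(b_2-b_1)$, so if this quantity is $\le 0$ both counts are nonincreasing, and a $0$ stays a $0$. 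Your proposed goal $|A_e|=|B_e|=2k$ is, moreover, simply not implied by ``no good--good pair''; that condition only yields $|A_e|+|B_e|\ge 2k$ (every pair whose $f$ misses $e$ must have its $g$ hit $e$). So the slice/tier argument you were worried about would be chasing a false statement. Drop it, read the lemma the intended way, and your structural and monotonicity pieces assemble into a correct and arguably cleaner proof than the paper's.
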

\begin{proof}
We only need to check that a swap can not cause an edge $e'$ not on $Q$ to have more pairs with both links covering $e$. Consider that some swap occurred because $e$ was covered by both $f_i, g_j$ but not covered by $f_{i'}, g_{j'}$. If $e'$ is covered by both links in both pairs $f_i, g_{j'}$ and $f_{i'}, g_j$ then it was covered by both links in the pairs before the swap. So, consider that $e'$ is covered by both links $f_i, g_{j'}$ after the swap but was not covered by both links in either of the pairs before the swap. So, neither of $f_{i'}, g_j$ cover $e'$.

Right now we have that $f_i$ covers $e_1, e', e$ but not $e_2$; $g_j$ covers $e_2, e$ but not $e_1, e'$; $f_{i'}$ covers only $e_1$; $g_{j'}$ covers $e_2, e'$ and not $e_1, e$.

Consider the tree $T'$ which is all the edges of $T$ contracted except for $e, e', e_1, e_2$. The edges $e_1,e_2$ were adjacent in $T$ so they are still adjacent in $T'$. $e_1$ can't separate $e_2$ and $e$ as $g_j$ covers $e$ and $e_2$. Similarly, $e_2$ can't separate $e_1$ and $e$ because of $f_i$. Likewise, $e_1$ can't separate $e_2$ and $e'$ because of $g_{j'}$, and $e_2$ can't separate $e_1$ and $e'$ because of $f_i$. This leaves only the three possibilities shown in Figure~\ref{fig:possibilities}.

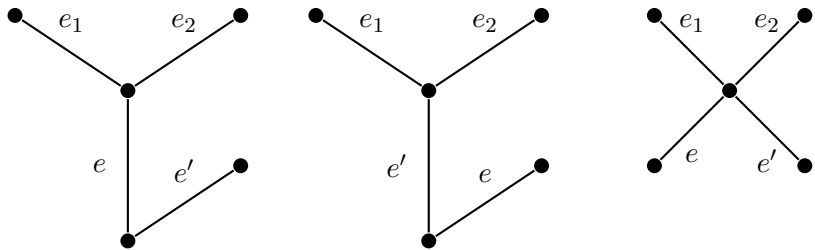
\begin{figure}
\centering
\begin{tikzpicture}
\node[circle, fill=black,thick, inner sep=2pt, minimum size=0.1cm](a0) at (0,0) {};
\node[circle, fill=black,thick, inner sep=2pt, minimum size=0.1cm](b0) at (-1.5,1) {};
\node[circle, fill=black,thick, inner sep=2pt, minimum size=0.1cm](c0) at (1.5,1) {};
\node[circle, fill=black,thick, inner sep=2pt, minimum size=0.1cm](d0) at (0,-2) {};
\node[circle, fill=black,thick, inner sep=2pt, minimum size=0.1cm](e0) at (1.5,-1) {};
\draw[thick] (a0) -- (b0) node [midway, label= above:$e_1$] {};
\draw[thick] (a0) -- (c0) node [midway, label= above:$e_2$] {};
\draw[thick] (a0) -- (d0) node [midway, label= left:$e$] {};
\draw[thick] (d0) -- (e0) node [midway, label= above:$e'$] {};
\node[circle, fill=black,thick, inner sep=2pt, minimum size=0.1cm](a1) at (4,0) {};
\node[circle, fill=black,thick, inner sep=2pt, minimum size=0.1cm](b1) at (2.5,1) {};
\node[circle, fill=black,thick, inner sep=2pt, minimum size=0.1cm](c1) at (5.5,1) {};
\node[circle, fill=black,thick, inner sep=2pt, minimum size=0.1cm](d1) at (4,-2) {};
\node[circle, fill=black,thick, inner sep=2pt, minimum size=0.1cm](e1) at (5.5,-1) {};
\draw[thick] (a1) -- (b1) node [midway, label= above:$e_1$] {};
\draw[thick] (a1) -- (c1) node [midway, label= above:$e_2$] {};
\draw[thick] (a1) -- (d1) node [midway, label= left:$e'$] {};
\draw[thick] (d1) -- (e1) node [midway, label= above:$e$] {};
\node[circle, fill=black,thick, inner sep=2pt, minimum size=0.1cm](a2) at (8,0) {};
\node[circle, fill=black,thick, inner sep=2pt, minimum size=0.1cm](b2) at (7,1) {};
\node[circle, fill=black,thick, inner sep=2pt, minimum size=0.1cm](c2) at (9,1) {};
\node[circle, fill=black,thick, inner sep=2pt, minimum size=0.1cm](d2) at (7,-1) {};
\node[circle, fill=black,thick, inner sep=2pt, minimum size=0.1cm](e2) at (9,-1) {};
\draw[thick] (a2) -- (b2) node [midway, label= above:$e_1$] {};
\draw[thick] (a2) -- (c2) node [midway, label= above:$e_2$] {};
\draw[thick] (a2) -- (d2) node [midway, label= below:$e$] {};
\draw[thick] (a2) -- (e2) node [midway, label= below:$e'$] {};
\end{tikzpicture}
\caption{The three possible configurations of the edges $e_1, e_2, e, e'$}
\label{fig:possibilities}
\end{figure}

In the first case of Figure~\ref{fig:possibilities}, then $g_{j'}$ must cover $e$ as it covers $e_2$ and $e'$. This is a contradiction as $g_{j'}$ does not cover $e$. In the second case of Figure~\ref{fig:possibilities}, $g_j$ would cover $e'$ as it covers $e_2$ and $e$. This is a contradiction as $g_j$ does not cover $e'$. Due to $f_i$ covering $e', e_1, e$, it must be the case that $e_1, e, e'$ are all on a path; this removes the third case in Figure~\ref{fig:possibilities}.

Therefore, the swaps never increase the number of pairs which both cover an edge. When the algorithm ends, every pair covers an edge, or that edge is covered in every pair.
\end{proof}
Note that this swapping algorithm would work regardless of which $2k$ links through $e_1$, and through $e_2$ were chosen and  how we initially paired them.

Now we can use Algorithm~\ref{alg:pair} to start a coloring on either side of the abundant path $Q$. We now need to coordinate the pairings and then finish coloring the abundant path. The first thing to observe, we can swap any two long links as long as the two edges of $e_1,e_2,e_3,e_4$ they cover are the same. This will never create any edges which have two links from the same pairing but not a link from every pairing. To coordinate the pairings on either side of $Q$ then we need to deal with the long links. We consider them in three mutually exclusive and collectively exhaustive cases: the first case where there are at most $2k$ long links, the second when there are more than $2k$ long links, and finally, when we have $2k$ long links covering one of $e_1,e_2,e_3, e_4$.

\paragraph{At most $2k$ long links.} In this case, we first observe that the initial pairing will use all the long links, and none of the long links are paired up with each other in Algorithm~\ref{alg:pair}. In addition, a swap will always be initiated by a pair of short links $f_i, g_j$; if it were a pair of a long link and a short link, $f_i, g_j$ would have no common edges outside of $Q$. So, any swap involves at most one long link. Therefore, a swap never creates a pair of two long links.

Use Algorithm~\ref{alg:pair} to pair up $2k$ links for $e_1,e_2$ and $2k$ links for $e_3, e_4$. Every long link is used in a pairing for $e_1,e_2$ and a pairing for $e_3, e_4$ and no pairing has two long links. We use Algorithm~\ref{alg:extend} to color the links to cover $Q$.

\begin{algorithm}
\KwIn{Tree $T$, solution $x$ to NODE-LP, least common multiple $k$, adjacent deficient edges $e_1, e_2$, $2k$ pairs of links through $e_1, e_2$, adjacent deficient edges $e_3, e_4$, $2k$ pairs of links through $e_3, e_4$}
\KwOut{Colors the pairs and some other links in a conflict-free way to cover $Q$}
\For{ $f_i,g_i$ one of the $2k$ pairs covering $e_1,e_2$} {
	Color $f_i,g_j$ with an unused color $c_i$\;
	\If {$f_i$ or $g_i$ is a long link} {
		WLOG let $f_i, h_i$ be a pair of $e_3, e_4$\;
		Color $h_i$ with $c_i$\;
	}
}
\For{ $i=1$ to $j-1$} {
	\While{$q_iq_{i+1}$ doesn't have a color $c$ } {
	Let $f$ be the uncolored link covering $q_iq_{i+1}$ and the most number of edges on $Q$ after $q_iq_{i+1}$\;
	Color $f$ with color $c$\;
	\If{$f$ is in a pair for $e_3, e_4$} {
		Let $f,g$ be the pair\;
		Color $g$ with color $c$\;
		}
	}
}
\caption{Takes the pairs from Algorithm~\ref{alg:pair} and extends them to cover all of $Q$, and be conflict free}
\label{alg:extend}
\end{algorithm}

\begin{lemma}
The coloring produced by Algorithm~\ref{alg:extend} is conflict-free, when there are at most $2k$ long links.
\end{lemma}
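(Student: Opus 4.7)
The plan is to bound, for each color $c$ and each abundant edge $e'$ off $Q$, the number of links of color $c$ covering $e'$ by $2$, unless $e'$ already has all $2k$ colors (in which case the conflict notion is vacuous). I would split the analysis into contributions from the two phases of Algorithm~\ref{alg:extend}. A key preliminary observation is that every long link is colored during phase 1: the $2k$ pairs on each side of Algorithm~\ref{alg:pair} include all long links (chosen as the highest-$Q$-coverage links), no pair contains two long links (as noted in the paragraph preceding the algorithm), and the ``if long'' branch colors each long link's $(e_3,e_4)$-partner. Consequently phase 2 only ever processes short links.

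For phase 1, the plan is to invoke the pairing property of Algorithm~\ref{alg:pair} on both the $(e_1,e_2)$ and $(e_3,e_4)$ sides to conclude that, for each color $\gamma_i$, either all pairs on that side cover $e'$ (forcing all $2k$ colors at $e'$) or at most one link of the relevant pair covers $e'$. A brief case check on the set $\{a_i, b_i, h_i\}$ then yields at most two phase-1 copies of $\gamma_i$ at $e'$. The core of the argument is a corresponding phase-2 claim: for each color $c$, each subtree branch off $Q$, and each $Q$-direction, phase 2 colors at most one link of color $c$ entering $Q$ through that branch in that direction. I would prove this by contradiction using the maximum-rightward-reach selection rule of Algorithm~\ref{alg:extend}---two such links of color $c$ would produce an earlier iteration at which the farther-reaching one was uncolored and covered the same $Q$-edge with strictly greater rightward reach, violating the rule. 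The extra $(e_3,e_4)$-partner coloring triggered in phase 2 is handled by the pairing property of Algorithm~\ref{alg:pair} applied to the $(e_3,e_4)$ side.

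Combining the two bounds uses the geometric observation that phase-1 copies of $\gamma_i$ enter $Q$ only at $q_1$ (through $a_i$ or $b_i$) or at $q_j$ (through $h_i$), while phase-2 copies enter at some interior $q_s$; so at a given $e'$ the two phases contribute disjoint link sets, each of size at most two, and a finer accounting through the entry branch cuts the total to at most two. The main obstacle is the case analysis on the position of $e'$---whether in a subtree off an interior $q_s$, in an abundant subtree off $P_1$ or $P_2$, or near the endpoints of $Q$---since the subsets of phase-1 and phase-2 links that can reach $e'$ differ in each regime. The most delicate subcase is when $e'$ lies in a subtree off $P_1$ or $P_2$, where both phase-1 links (through $a_i$, $b_i$, or $h_i$) and phase-2 partner-colorings from the $(e_3,e_4)$-pairs can cover $e'$; there one must verify that whenever phase 1 places two copies of $\gamma_i$ at $e'$, the partner-activation rule in phase 2 cannot push the count to three, by combining the $(e_3,e_4)$-pairing property with the max-rightward-reach rule.
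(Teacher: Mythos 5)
Your plan diverges substantially from the paper's proof, and it has an unsound intermediate claim plus acknowledged but unresolved pieces, so there is a genuine gap. The central structural claim you rely on to combine the two phase bounds---that phase-1 copies of a color enter $Q$ only at $q_1$ or $q_j$ while phase-2 copies enter only at interior vertices $q_s$, so the contributions are disjoint---is false. A phase-2 ``main'' link $f$ is just the uncolored link of maximum rightward reach covering some $q_sq_{s+1}$; nothing prevents it from having an endpoint past $e_1$ or $e_2$ (entering at $q_1$), and every phase-2 partner link $g$ covers $e_3$ or $e_4$ and hence enters at $q_j$. Once that disjointness fails, the naive sum of your two per-phase bounds gives $4$, not $2$, and the ``finer accounting through the entry branch'' that would cut this down is exactly the crux of the lemma, which your sketch leaves open (you acknowledge this, along with the delicate $P_1/P_2$ subcase). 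So the argument does not close.

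The paper avoids this bookkeeping by a single contradiction argument that never separates the phases. Suppose some edge $e$ (necessarily off $Q$, with nearest $Q$-vertex $q_i$) receives three links of one color $c$ without yet having all $2k$ colors. Any link colored by Algorithm~\ref{alg:extend} that covers $e$ must also cover a $Q$-edge, hence must cross $q_i$ and traverse exactly one of $q_{i-1}q_i$ or $q_iq_{i+1}$; by pigeonhole two of the three, say $f_1, f_2$, traverse the same one, say $q_{i-1}q_i$. Then the sets of $Q$-edges covered by $f_1$ and $f_2$ are both intervals ending at $q_{i-1}q_i$, hence nested. If $f_1,f_2$ are the $(e_1,e_2)$-pair for color $c$, the pairing guarantee of Algorithm~\ref{alg:pair} forces $e$ to already carry all $2k$ colors, a contradiction. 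Otherwise the later-colored of the two, say $f_1$, was colored in phase 2 for some $Q$-edge $q_sq_{s+1}$ that was missing $c$: if $f_1$'s $Q$-interval $\subseteq$ $f_2$'s then $f_2$ already gave $q_sq_{s+1}$ color $c$, and if $f_2$'s $\subseteq$ $f_1$'s then the maximum-reach rule would have selected $f_1$ at the time $f_2$ was colored. Either way the algorithm's rules are violated. This nested-interval observation, driven by the pigeonhole on the two $Q$-edges at $q_i$, is the missing idea that lets one dispense with phase-separated accounting entirely.
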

\begin{proof}
Suppose for a contradiction that there was some edge $e$ covered by 3 links of a color $c$ but not covered by all $2k$ colors. This edge $e$ is not on $Q$ (as all edges of $Q$ are covered by all $2k$ colors). Let $q_i$ be the closest vertex of $Q$ to $e$ in $T$. Without loss of generality, two of the links of color $c$ ($f_1,f_2$) through $e$ also go through $q_{i-1}$ (the other possibility is that there is two links of color $c$ through $q_{i+1}$. If both $f_1,f_2$ were part of a pair from $e_1, e_2$, then by Algorithm~\ref{alg:pair} then it would be the case that $e$ has all $2k$ colors covering it. So, without loss of generality $f_1$ was added to extend the coloring of color $c$ along $Q$ when $f_2$ was already colored $c$. If $f_1$ covers a subset of the edges of $Q$ that $f_2$ does, then $f_1$ never would have been colored with color $c$ by Algorithm~\ref{alg:extend}. If $f_2$ covers a subset of the edges of $Q$ that $f_1$ does, then $f_1$ would have been chosen and colored before $f_2$. Both of these are a contradiction to the correct running of Algorithm~\ref{alg:extend}.
\end{proof}

This shows when there are at most $2k$ long links, we can form a conflict-free partial coloring that covers all of $Q, e_1,e_2,e_3, e_4$.

\paragraph{More than $2k$ long links} In this case, we don't have to worry about covering $Q$ as every pair will have a long link. However we will have to be a little more careful with the coordination of the two pairings. In this case, consider there are $2k+c$ long links. We will first show we can create $c$ pairs that use two long links, and then just create the remaining $2k-c$ pairs with one long link each.

\begin{lemma}
If there are $2k+c$ long links, and no edge of $e_1,e_2,e_3,e_4$ has more than $2k$ long links, then there exists $c$ pairs of long links such that each of these $c$ pairs cover $e_1,e_2,e_3, e_4$.
\end{lemma}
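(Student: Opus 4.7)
The plan is to classify each long link by the pair of side-edges it traverses and then reduce the lemma to a counting inequality on these types. Because $T$ is binary and (in the main case) $q_1$ lies in the interior of $P_1$, the three edges at $q_1$ are exactly $e_1$, $e_2$, and the first edge $q_1q_2$ of $Q$. Any long link has one endpoint in the component of $T\setminus\{q_1q_2\}$ on the $P_1$-side; the path from that endpoint up to $q_1$ must therefore traverse exactly one of $e_1,e_2$. The symmetric statement at $q_j$ says that every long link crosses exactly one of $e_3,e_4$. I would therefore label long links by pairs $(i,j)\in\{1,2\}\times\{3,4\}$ and let $n_{ij}$ denote the number of long links of type $(i,j)$.

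A pair of long links covers all of $e_1,e_2,e_3,e_4$ if and only if its two types are complementary, i.e., a $(1,3)$-link paired with a $(2,4)$-link or a $(1,4)$-link paired with a $(2,3)$-link. Hence the maximum number of such pairs we can form is $\min(n_{13},n_{24})+\min(n_{14},n_{23})$, and the lemma reduces to showing that this quantity is at least $c$. The hypothesis that no edge among $\{e_1,e_2,e_3,e_4\}$ is covered by more than $2k$ long links gives the four upper bounds $n_{13}+n_{14}\le 2k$, $n_{23}+n_{24}\le 2k$, $n_{13}+n_{23}\le 2k$, and $n_{14}+n_{24}\le 2k$, while the total count gives $n_{13}+n_{14}+n_{23}+n_{24}=2k+c$. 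Subtracting each upper bound from the total equation yields the four complementary lower bounds $n_{23}+n_{24}\ge c$, $n_{13}+n_{14}\ge c$, $n_{14}+n_{24}\ge c$, and $n_{13}+n_{23}\ge c$.

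The proof is then finished by a short four-case analysis on the signs of $n_{13}-n_{24}$ and $n_{14}-n_{23}$. In each of the four sign patterns, $\min(n_{13},n_{24})+\min(n_{14},n_{23})$ collapses to one of the four complementary lower bounds above, and is therefore at least $c$; picking $p\le\min(n_{13},n_{24})$ pairs of type $(1,3)$--$(2,4)$ and $q\le\min(n_{14},n_{23})$ pairs of type $(1,4)$--$(2,3)$ with $p+q=c$ produces the desired $c$ pairs. The only non-routine step is the initial four-way classification of long links, which I do not anticipate as a real obstacle once the local structure at $q_1$ and $q_j$ is spelled out; the remainder is a purely arithmetic manipulation that could equivalently be phrased as a bipartite matching / Hall-type argument on two bipartitions of the long links.
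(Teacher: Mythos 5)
Your proof is correct and follows essentially the same approach as the paper: both classify long links by the pair $(e_i,e_j)$ of side-edges they cross, reduce the claim to showing $\min(n_{13},n_{24})+\min(n_{14},n_{23})\ge c$, and derive this from the hypothesis that each $e_i$ carries at most $2k$ long links together with the total count $2k+c$. The paper argues by contradiction with a WLOG step while you give a direct four-case argument, but this is only a presentational difference.
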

\begin{proof}
Let $c_{ij}$ be the number of links that cover $e_i$ and $e_j$. A pair that covers all of $e_1,e_2,e_3,e_4$ consists of a long link through $e_1,e_3$ and a long link through $e_2, e_4$ or a long link through $e_1,e_4$ and a long link through $e_2, e_3$.   Let $c_A=\min(c_{13}, c_{24}), c_B= \min(c_{14}, c_{23})$. If there were not $c$ pairs that covered all of $e_1, e_2, e_3,e_4$ then $c_A+c_B < c$. Without loss of generality, $c_A=c_{13}, c_B=c_{14}$ so the number of long links covering $e_1$ is less than $c$. But then, the total number of long links covering $e_2$ is more than $2k$. This is a contradiction. Therefore, we can make $c$ pairs such that each of these $c$ pairs cover $e_1,e_2,e_3, e_4$.
\end{proof}

For this case, we simply create these $c$ pairs of long links indicated in the lemma above as the first $c$ colors. Next, we can use the remaining $2k-c$ long links as the start to fill the remaining colors using Algorithm~\ref{alg:pair}. Now, if $e_i$ isn't covered by a color, pick an uncolored link $f$ that doesn't go through any other $e_j$ and color $f$ with the remaining color. This creates no conflicts. Each color consisting of two long links covers only the edges of $Q$ twice and everywhere else once. Each color with only one long link only has two short links, and the color covers each of $e_1,e_2,e_3,e_4$ exactly once; each edge can only have at most two links of this color covering it.

\paragraph{More than $2k$ long links cover a single $e_i$}
In this case, without loss of generality $e_1$ has $2k$ long links. In this case, we simply start by creating and coloring the $2k$ pairs as in Algorithm~\ref{alg:pair} for $e_3, e_4$. Each of these pairs will cover all of $Q$ and $e_1,e_3,e_4$. For each color which $e_2$ is missing, let $f$ be an uncolored link covering $e_2$ but not $e_1$. Color $f$ with the color that $e_2$ is missing. This again keeps the coloring conflict-free by the same reasoning as before. This gives a partial coloring that covers all of $Q, e_1,e_2,e_3,e_4$.

\paragraph{Finishing the deficient paths}
In all the cases above, we colored the links to create a conflict-free partial coloring that covers all of $Q$ and $e_1,e_2,e_3,e_4$. To finish the deficient paths, we will use Lemma~\ref{lem:def} to observe there are $2k$ uncolored links crossing every deficient edge. Consider starting at $e_1$ and moving away from $Q$ along $P_1$. Let $e$ be the deficient edge; there are $2k$ links through $e$ not through $e_1$. For each color $e$ is missing, choose one of these uncolored links for this color. Repeat this moving away from $e_1$ along $P_1$. We can do this same process along $P_1$ moving away from $e_2$. This guarantees that we only add colors to the abundant subtrees hanging off of $P_1$ (and not to abundant subtrees hanging off of $Q$ or $P_2$). In addition, we can add each color at most twice to every subtree. We can repeat this process similarly with $e_3,e_4$ on $P_2$. This gives a conflict-free partial coloring which gives all the edges of $P_1, P_2, Q$ all $2k$ colors. By Theorem~\ref{thm:abundSubtree}, this coloring can be completed and there is a $\frac32$ approximation when there are only two deficient paths.

\paragraph{End of $P_1, P_2$}
If $P_1$ or $P_2$ have only one edge adjacent to $Q$ the entire process above can be done simply without creating a pairing on such a side. Consider $P_1$ only has one edge $e_1$ adjacent to $Q$; instead of $2k$ pairs created for $P_1$ just take the $2k$ links through $e_1$ which go the furthest down $Q$. This replaces Algorithm~\ref{alg:pair} for $P_1$ and then we proceed according to the case we are in.
\\ \\
This proves Theorem~\ref{thm:twoDefPaths} as in all the cases we have provided a $2k$ coloring of $3kx$ such that every color is a feasible solution. This coloring method could potentially be extended to the  general case of multiple deficient paths, with the key difficulty being that there are links that cover segments of potentially several such deficient paths, and their colorings must be somehow globally coordinated. We have no counter examples to the success of such a potential approach even though we have no candidate algorithm that might complete the job for all feasible solutions of NODE-LP. 
\section{Comparing the linear programs}
\label{sec:obs}

In section~\ref{sec:large}, we proposed a conjecture that every extreme point solution to the ODD-LP has a link which gets $x_{\ell}\geq 2/3$ or all the non-zero $x_\ell$ are at least $1/3$. This conjecture does not hold for extreme points of the EDGE-LP. An example is given in Figure~\ref{fig:counterexample}.

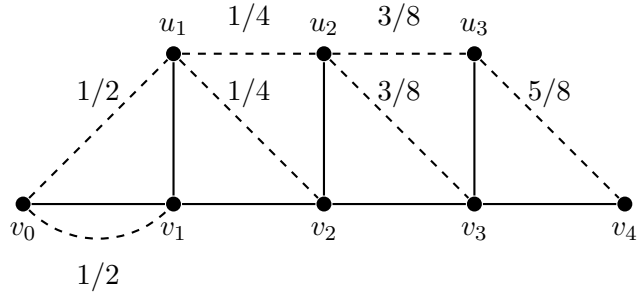
\begin{figure}[h!]
\centering
\begin{tikzpicture}
\node[circle, fill=black,thick, inner sep=2pt, minimum size=0.1cm, label=below:$v_0$](v0) at (0,0) {};
\node[circle, fill=black,thick, inner sep=2pt, minimum size=0.1cm, label=below:$v_1$](v1) at (2,0) {};
\node[circle, fill=black,thick, inner sep=2pt, minimum size=0.1cm, label=below:$v_2$](v2) at (4,0) {};
\node[circle, fill=black,thick, inner sep=2pt, minimum size=0.1cm, label=below:$v_3$](v3) at (6,0) {};
\node[circle, fill=black,thick, inner sep=2pt, minimum size=0.1cm, label=below:$v_4$](v4) at (8,0) {};
\node[circle, fill=black,thick, inner sep=2pt, minimum size=0.1cm, label=above:$u_1$](u1) at (2,2) {};
\node[circle, fill=black,thick, inner sep=2pt, minimum size=0.1cm, label=above:$u_2$](u2) at (4,2) {};
\node[circle, fill=black,thick, inner sep=2pt, minimum size=0.1cm, label=above:$u_3$](u3) at (6,2) {};
\draw[thick] (v0) -- (v1) -- (v2) -- (v3) -- (v4);
\draw[thick] (v1) -- (u1);
\draw[thick] (v2) -- (u2);
\draw[thick] (v3) -- (u3);
\draw[thick, dashed] (v0) -- (u1) node [midway, label= above:$1/2$] {};
\draw[thick, dashed] (u1) -- (u2) node [midway, label= above:$1/4$] {};
\draw[thick, dashed] (u2) -- (u3) node [midway, label= above:$3/8$] {};
\draw[thick, dashed] (u3) -- (v4) node [midway, label= above:$5/8$] {};
\draw[thick, dashed] (v0) to[out=-45,in=-135] (v1);
\node [label= below:$1/2$] at (1, -.5) {};
\draw[thick, dashed] (u1) -- (v2) node [midway, label= above:$1/4$] {};
\draw[thick, dashed] (u2) -- (v3) node [midway, label= above:$3/8$] {};
\end{tikzpicture}
\caption{An example of an extreme point of EDGE-LP which doesn't fulfill the conjecture.}
\label{fig:counterexample}
\end{figure}

We believe the conjecture to be true for extreme points of ODD-LP. We have performed exhaustive searches on all extreme points of ODD-LP to verify it on small trees (binary trees with at most 12 nodes). In addition, Fiorini et. al~\cite{fiorini2017frac} showed that in the special case of no in-links then ODD-LP is integer. Given a rooted tree $T$ with root $r$ then an \emph{in-link} is a link $\ell$ where $\ell$ does not go through $r$ and  $\ell$ does not go from one node to it's ancestor. 
This indicates that the ODD-LP potentially has more structure than the EDGE-LP that might be exploited to prove the conjecture.

In this paper, we use the structure given from the NODE-LP (e.g. Lemma~\ref{lem:def}) to prove theorem~\ref{thm:twoDefPaths}. While the NODE-LP does add some constraints to the EDGE-LP, the NODE-LP is not much stronger than the EDGE-LP as we show in the next observation. In particular, we can transform any TAP instance to a slightly bigger one by a gadget expansion at every node so that any feasible solution to the EDGE-LP on the original instance is feasible to the NODE-LP in the expanded instance. This shows that if we were to do the gadget expansion for any input, the NODE-LP constraints alone (without the other ODD=LP constraints) will not add any strength to the resulting solutions. 

\begin{lemma}
After a transformation of any TAP instance that leaves the solutions unchanged, the integrality gap for the NODE-LP  on the transformed instance is the same as the integrality gap for the EDGE-LP in the original instance.
\end{lemma}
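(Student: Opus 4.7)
The plan is to describe a local gadget expansion applied simultaneously at every internal node of the tree. By Theorem~\ref{thm:structure} I may assume $(T,L,c)$ is binary with all links leaf-to-leaf. For each internal node $v$ with incident tree edges $e_i = v w_i$ ($i=1,2,3$), I will subdivide $e_i$ close to $v$ by inserting a new node $m_i^v$, attach a fresh dummy leaf $\ell_i^v$ to $m_i^v$ via a cost-$0$ tree edge, and introduce three cost-$0$ dummy links $\ell_i^v\ell_j^v$ for $1 \le i < j \le 3$. When $w_i$ is itself internal, its own gadget places a second subdivision point adjacent to $m_i^v$ on the same original edge; either way every new node $m_i^v$ ends up with degree $3$ (neighbors $v$, $\ell_i^v$, and a third neighbor $n_i^v$ which is $w_i$ or the analogous subdivision point near $w_i$), every $\ell_i^v$ is a leaf, and the resulting $T'$ remains binary with all links in $L'$ leaf-to-leaf.

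Next I will check the bijection of integer solutions with matching cost. Given feasible $A \subseteq L$, setting $A' = A \cup \bigcup_v \{\ell_i^v\ell_j^v : i<j\}$ is feasible on $T'$: each dummy edge $m_i^v\ell_i^v$ is covered by its two incident dummy triangle links, and the two gadget edges $vm_i^v$ and the ``middle'' subdivided piece of $e_i$ are covered by the same original link of $A$ that covered $e_i$. Since the dummy links are free, $c'(A') = c(A)$. The key observation for the converse is that the middle subdivided piece of each $e_i$ in $T'$ can only be covered by original links covering $e_i$, since no dummy triangle reaches past its own $m_i^v$; hence $A = A' \cap L$ is feasible on $T$ with identical cost.

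The heart of the proof is the LP equivalence. In the lifting direction, given EDGE-LP feasible $x$, I define $x'$ by copying $x$ on original links and setting each dummy link to $\tfrac12$; cost is preserved. Every EDGE-LP constraint on $T'$ holds, with middle pieces inheriting $x(\delta(e_i)) \ge 1$ and every other gadget edge receiving $1$ from its two incident dummy triangle links. For the NODE-LP constraint at $m_i^v$, the union $\delta(vm_i^v) \cup \delta(m_i^v\ell_i^v) \cup \delta(m_i^v n_i^v)$ reduces exactly to the original links covering $e_i$ together with the two dummy links incident to $\ell_i^v$, giving $x'$-mass at least $1 + \tfrac12 + \tfrac12 = 2$. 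For the NODE-LP constraint at an original internal node $v$, all three dummy triangle links at $v$ appear in $\bigcup_i \delta(vm_i^v)$, contributing $\tfrac32$ in addition to at least $1$ from original links covering any single $e_i$, exceeding $2$. In the reverse direction, restricting any NODE-LP feasible $y$ on $T'$ to $x = y|_L$ preserves cost (dummies are free) and is immediately EDGE-LP feasible on $T$, since the EDGE-LP constraint that $y$ already satisfies at the middle piece of $e_i$ in $T'$ is literally $x(\delta(e_i)) \ge 1$.

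The main obstacle will be the careful bookkeeping of exactly which links cover which gadget edges, particularly at the interface where two adjacent gadgets share an original edge via their respective subdivision points. I must verify that no dummy link ever escapes its own local gadget, that no new degree-$2$ node is introduced (so $T'$ stays binary), and that every newly created NODE-LP constraint is fully accounted for by the $\tfrac12$-valued dummy triangle. Combining the bijection of integer solutions with the matching LP optima then yields equality of the two integrality gaps, as claimed.
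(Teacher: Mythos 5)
Your proof is correct and takes essentially the same approach as the paper's: a local zero-cost gadget expansion at each internal node, together with (i) a cost-preserving correspondence of integer solutions and (ii) a cost-preserving correspondence of LP solutions, so both the numerator and denominator of the integrality gap are preserved. The only difference is cosmetic: the paper attaches a dummy leaf $b_i$ to each subdivision point and uses a zero-cost \emph{path} of two dummy links $b_1b_2$ and $b_2b_3$ assigned LP value $1$, whereas you use a zero-cost \emph{triangle} $\ell_1\ell_2,\ell_2\ell_3,\ell_1\ell_3$ assigned LP value $\tfrac12$ each. Both gadgets give exactly $1$ unit of dummy coverage to each leaf edge $m_i\ell_i$, make every new node-constraint hold, and keep all dummy links confined to the local gadget, so the restriction map is well defined in both cases. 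Your verification of the node constraints at $m_i^v$ (mass $\ge 1 + \tfrac12 + \tfrac12 = 2$) and at $v$ (mass $\ge \tfrac32 + 1$) is accurate, and your observation that the middle subdivided piece of $e_i$ is covered only by original links is exactly the point the paper uses to justify the restriction direction.
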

\begin{proof}
In order to show this result, we will show that for every binary tree $T$ we can transform it to another binary tree $T'$ such that every feasible solution of the EDGE-LP for $T$ corresponds to a feasible solution of the  NODE-LP for $T'$.

\begin{figure}
	\centering
	\begin{subfigure}{.45\textwidth}
		\centering
		\begin{tikzpicture}
			\tikzstyle{every node}=[draw,circle,fill=white,minimum size=4pt, inner sep=0pt]
			\draw (0,0) node (v) [label = right: $v$] {} -- (60: 2cm) node (1) [label = right: $n_1$] {};
			\draw (v) -- (180: 2cm) node (2) [label = left: $n_2$] {};
			\draw (v) -- (300: 2cm) node (3) [label = right: $n_3$] {};
		\end{tikzpicture}
		\caption{The original internal node $v$ and it's neighborhood}
		\label{fig:orig}
	\end{subfigure}
	\begin{subfigure}{.45\textwidth}
		\centering
		\begin{tikzpicture}
			\tikzstyle{every node}=[draw,circle,fill=white,minimum size=4pt, inner sep=0pt]
			\draw (0,0) node (v) [label = right: $v$] {} -- (60: 1.5cm) node (a1) [label = right: $a_1$] {}-- (60: 3cm) node (n1) [label = right: $n_1$] {};
			\draw (a1) -- (90: 1.5cm) node (b1) [label = left: $b_1$] {};
			\draw (v) -- (180: 1.5cm) node (a2) [label = above: $a_2$] {}-- (180: 3cm) node (n2) [label = left: $n_2$] {};
			\draw (a2) -- (210: 1.5cm) node (b2) [label = below: $b_2$] {};
			\draw (v) -- (300: 1.5cm) node (a3) [label = left: $a_3$] {}-- (300: 3cm) node (n3) [label = right: $n_3$] {};
			\draw (a3) -- (330: 1.5cm) node (b3) [label = above: $b_3$] {};
			\draw[dashed] (b1) -- (b2) -- (b3);
		\end{tikzpicture}
		\caption{The gadget which gets placed around $v$. The dotted links are added and given weight zero. No other links are added adjacent to $b_1,b_2,b_3$}
		\label{fig:new}
	\end{subfigure}
	\caption{The left figure shows the original node, and the right figure shows the node after the gadget transformation has been applied to it}
\end{figure}
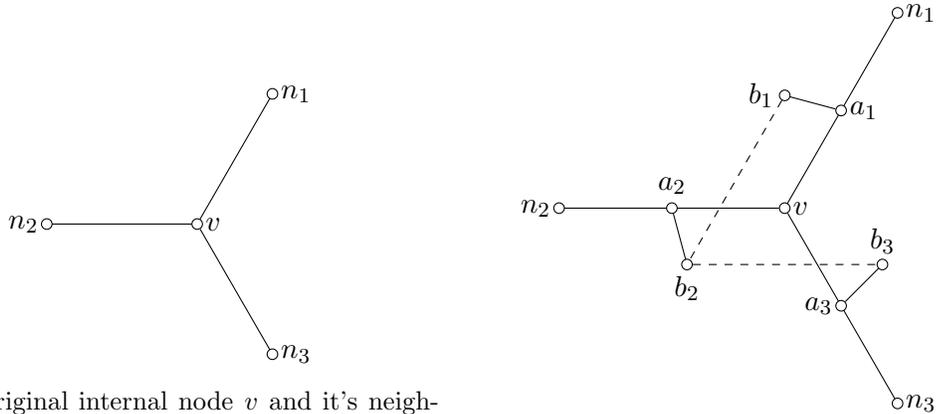
Consider an original node $v\in T$ with neighbors $n_1,n_2,n_3$ as shown in Figure~\ref{fig:orig}. We will transform $v$ as shown into the structure given in Figure~\ref{fig:new}.

Any feasible solution to the EDGE-LP on $T$ can be made a feasible solution to the NODE-LP on $T'$ by adding all the zero-cost links with value 1 in the solution. These zero cost links guarantee all new tree edges that we added are covered. Every degree 3 node, $u$ has at least 2 weight of links covering its neighbors. 
%Either $u$ is an original node and gets covered by 2 weight zero links

Consider any feasible solution to the NODE-LP on $T'$. For every node $v$the solution must choose links $b_1b_2$ and $b_2b_3$ with value 1 for otherwise $a_1b_1$ and $a_3b_3$ will not be covered. To get a feasible solution to the EDGE-LP on $T$ we simply remove the links $b_1b_2, b_2b_3$ around every node.

Neither of these transformations change the cost, so we have proven our result as desired.
This gadget forces $b_1,b_3$ to be covered fully by the dotted links. These ensure that the node constraints around $v,a_1,a_2,a_3$ are all satisfied as long as the edge constraints are satisfied.

%Therefore, if we replace each original internal node with this gadget, then every tight node constraint is linearly dependent on some tight edge constraints and non-negativity constraints. The solution to the modified tree satisfying node constraints (without the new dotted links) is a solution to the original tree with only edge constraints.  Therefore, node constraints can't gain you anything more than the edge constraints.

\end{proof}

While the NODE-LP is not any stronger than the EDGE-LP, we believe that the ODD-LP is stronger than both of these LPs. As noted before, the ODD-LP may add key constraints which would allow us to extend the top-down deficient path coloring beyond just one or two deficient paths.

\section{Three-Cycle TAP}
\label{sec:threecycle}
\subsection{Weighted Version}
In this section, we will consider the weighted version of 3TAP where the weights on the links, $c_f$, can take on any value. We first present an $O(\log n)$ approximation algorithm, and then we present a matching lower bound of $\Omega(\log n)$, where $n$ is the number of nodes in the tree.

\begin{theorem}
There is a $O(\log n)$ approximation algorithm for weighted 3TAP on $n$ nodes.
\end{theorem}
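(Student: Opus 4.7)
The plan is to reduce weighted 3TAP to the classical weighted set-cover problem and invoke the greedy $H_{|U|} = O(\log n)$-approximation. I will construct a set-cover instance whose ground set is $U = E(T)$ (the tree edges), and whose sets are the triangles $\Delta = \{u,v,w\}$ in $G$ containing at least one tree edge: the set for $\Delta$ is $S_\Delta = E(\Delta)\cap E(T)$ (of size $1$ or $2$), with cost $c_\Delta = \sum_{\ell \in E(\Delta)\setminus E(T)} c_\ell$, the total cost of the links (non-tree edges) in $\Delta$. Since triangles are indexed by vertex triples, this instance has $O(|V|^3)$ sets, so is of polynomial size. Running the greedy heuristic produces a collection $\mathcal{T}$ of triangles covering every tree edge with total cost at most $O(\log n)\cdot \mathrm{OPT}_{\mathrm{SC}}$.

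From $\mathcal{T}$ I read off the 3TAP solution $A = \bigcup_{\Delta\in\mathcal{T}}(E(\Delta)\setminus E(T))$, the union of all links used across the selected triangles; this is feasible because every tree edge lies in some chosen triangle, and its cost is at most $\sum_{\Delta\in\mathcal{T}} c_\Delta$ because each link is paid for only once regardless of how many chosen triangles contain it. Combined with a constant-factor bound $\mathrm{OPT}_{\mathrm{SC}} \leq O(1)\cdot \mathrm{OPT}_{\mathrm{3TAP}}$, this yields the desired $O(\log n)$ approximation. To establish the constant factor, I would argue that any optimum $A^\star$ gives a set-cover solution of cost at most $O(1)\cdot \mathrm{cost}(A^\star)$ by choosing, for every tree edge $e$, a triangle in $T+A^\star$ covering $e$ and preferring triangles with two tree edges (the one-link ``type-B'' triangles that discharge two coverage obligations at once) whenever available. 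The resulting set-cover cost equals $\sum_\ell c_\ell k_\ell$, where $k_\ell$ counts selected triangles containing $\ell$, and one needs to argue that $\sum_\ell k_\ell = 2\cdot(\#\text{type-A}) + (\#\text{type-B})$ is bounded by a constant multiple of $|A^\star|$.

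The main obstacle is proving this charging bound in full generality: a single link can participate in many two-link (``type-A'') triangles forced by high-degree tree vertices, so the per-edge triangle choices must be orchestrated carefully to avoid pathological overcounting. If the set-cover comparison turns out to be elusive in general, I would instead work with the natural LP relaxation with variables $x_\ell$ for each link and $y_{e,w}$ for each triangle candidate, constraints $y_{e,w}\leq x_\ell$ for every link $\ell$ in the triangle and $\sum_w y_{e,w}\geq 1$ per tree edge. Since the LP value lower-bounds $\mathrm{OPT}_{\mathrm{3TAP}}$ directly, I would round in two stages: first, sample each link independently with probability $\Theta(\log n)\cdot x_\ell$, and then patch any still-uncovered tree edge with its cheapest covering triangle. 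This bypasses the set-cover comparison altogether and yields the target $O(\log n)\cdot \mathrm{OPT}_{\mathrm{3TAP}}$ guarantee.
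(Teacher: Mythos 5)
Your first approach---set cover over triangles---does not give an $O(\log n)$ approximation, and the obstacle you flag at the end is fatal, not merely an inconvenience. The trouble is precisely that a single expensive link can be the only way to discharge many tree edges, and in your triangle-indexed instance you pay for that link once per triangle. Concretely, let the tree consist of a vertex $a$ adjacent to $c_1,\dots,c_m$, with a leaf $b$ hanging off $c_1$; the links are $ab$ with cost $1$, $bc_i$ with cost $0$ for $i\geq 2$, and everything else with prohibitive cost. An optimal 3TAP solution buys $ab$ and all the zero-cost links $bc_i$: each tree edge $ac_i$ lies in the triangle $\{a,c_i,b\}$ and $c_1b$ lies in $\{c_1,b,a\}$, so $\mathrm{OPT}_{\mathrm{3TAP}}=1$. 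But in your set-cover instance, the only finite-cost triangle covering $ac_i$ is $\{a,c_i,b\}$, and each such triangle has cost $c(ab)+c(bc_i)=1$, so $\mathrm{OPT}_{\mathrm{SC}}=\Theta(m)=\Theta(n)$. The ratio $\mathrm{OPT}_{\mathrm{SC}}/\mathrm{OPT}_{\mathrm{3TAP}}$ is therefore $\Theta(n)$, not $O(1)$, and greedy on your instance returns a solution of cost $\Theta(n)$ against an optimum of $1$. No charging scheme can rescue this, because the $\Theta(n)$ gap is between the two \emph{optima}, not an artifact of greedy.

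The paper avoids this by choosing a different ground structure for the set cover: the sets are \emph{stars} centered at a vertex $v$ (arbitrary subsets of edges incident to $v$), with cost equal to the total link cost in the star, and the star covers every tree edge induced among its non-center endpoints. In the example above, the single star at $b$ comprising $\{ba,bc_1,\dots,bc_m\}$ has cost $1$ and covers all of $ac_1,\dots,ac_m$ simultaneously; the link $ab$ is charged once rather than $m$ times. This is exactly the amortization your triangle formulation cannot express. The catch is that there are exponentially many stars, so one cannot simply enumerate sets; the paper instead observes that the greedy set-cover step only needs the maximum-density set, and that the densest star centered at a fixed $v$ reduces to a maximum-density-subgraph computation (Goldberg), which is polynomial. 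The factor between $\mathrm{OPT}_{\mathrm{SC}}$ and $\mathrm{OPT}_{\mathrm{3TAP}}$ then becomes $2$ (each link lies in at most two stars), which is what you were hoping to prove but cannot get with triangles.

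Your LP fallback is also not a proof as written. Independent sampling of links at rate $\Theta(\log n)\cdot x_\ell$ does not cover a tree edge $e=uv$ with constant (or even inverse-polylog) probability when the LP spreads $y_{e,\cdot}$ thinly: if $y_{e,w_i}=x_{uw_i}=x_{vw_i}=1/n$ for $n$ choices of $w_i$, the probability that any single triangle gets both of its links is $\Theta(\log^2 n/n^2)$, and even summed over all $n$ candidates this is $o(1)$. So the ``patch with the cheapest triangle'' step is doing essentially all the work, and you then need a bound of the form $\sum_e c^*(e)\leq O(\log n)\cdot\mathrm{LP}$ where $c^*(e)$ is the cheapest covering triangle for $e$; this is not argued and is false in general when links are shared across the constraints of many tree edges. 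If you want an LP-based route, you would need a rounding scheme that respects the fact that coverage is a conjunction of two link events (e.g., iterated/randomized rounding with alterations, or a reduction back to a covering structure where a single decision variable pays for a bundle of triangles)---which, once spelled out, brings you back to something like the star formulation the paper uses.
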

\begin{proof}
Consider any feasible solution $A$ to 3TAP, such that $T\cup A$ has every tree edge in a 3-cycle. For a vertex $v$, let $\delta(v)$ be the edges of $T\cup A$ adjacent to $v$.

To turn this problem into a set cover problem, we let the edges of $E(T)$ be the elements. For any subset of edges adjacent to a vertex $v$, $S_v$, we construct a set with cost $c(S_v)$ which covers the edges induced by the endpoints of the edges in $S_v$ (except for those edges adjacent to $v$).

By doubling each edge in the feasible solution, then we can decompose the entire solution into these stars. So, given a solution to 3TAP of total cost $C$, the corresponding set cover has a solution of at most $2C$. Given any solution to the set cover problem, then we simply add all the edges specified by the stars to the tree (with maybe some duplicates when edges are added from both endpoints' stars). Thus, any solution of cost $C$ to the set cover gives a solution to the original 3TAP solution of cost between $C/2$ and $C$. Therefore the optimal solutions to these two problems are within a factor of two of each other.

It is well known that minimum-cost set cover with $n$ elements has an $O(\log n)$ approximation as long as the densest set (that has the maximum ratio of newly covered elements divided by the cost of the set) can be found in polynomial time. For a fixed vertex $v$, we can find the maximum density star centered at $v$ as follows. Due to a result by Goldberg, one can find the maximum density subgraph; $S\subset V$ which minimizes $\frac{|E(S)|}{c(S)}$ in polynomial time~\cite{maxden}.
For the given center $v$, by setting the cost of another vertex $u$ to be $c(uv)$, we can use the maximum density subgraph algorithm to find the maximum density star from $v$, where the edges in the subgraph are the tree edges covered in triangles by the corresponding star. By repeating this for every choice of center vertex $v$, we can find the maximum density star in polynomial time.  This gives the maximum density set for the set cover problem in polynomial time. Then we can use the greedy algorithm for set cover to get an $O(\log n)$ approximation for 3TAP.
\end{proof}

Notice that in the above algorithm we used no properties of the original graph $T$. This algorithm will in fact work for any graph $T$ where the goal is to augment such that every edge is in a 3-cycle.

\begin{corollary}
The problem of finding a minimum cost augmentation of any graph $G$ where every edge must be in a 3-cycle has an $O(\log n)$ approximation.
\end{corollary}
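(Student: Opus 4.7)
The plan is essentially to observe that the preceding theorem's proof never uses the hypothesis that the input graph is a tree, so the same reduction and algorithm work verbatim once we re-interpret the set cover universe. I would therefore structure the argument as a careful re-audit of each step of the theorem's proof, pointing out that the tree structure is inessential.

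First I would set up the reduction for a general graph $G=(V,E(G))$ in direct analogy with the theorem. The set cover universe is $E(G)$ (the edges that must each end up in a 3-cycle after augmentation). For each $v \in V$ and each subset $S_v$ of links incident to $v$, create a set of cost $c(S_v)$ that covers exactly those edges of $E(G)$ whose endpoints both lie in the endpoints of $S_v$ other than $v$; these are precisely the edges placed into a 3-cycle (with apex $v$) by installing $S_v$. The same doubling argument then gives that if $A$ is a feasible augmentation of cost $C$ making every edge of $G$ lie in a 3-cycle in $G \cup A$, decomposing $A$ into the stars $S_v = A \cap \delta(v)$ and charging each link $uv \in A$ to both $S_u$ and $S_v$ yields a set cover of cost at most $2C$; conversely any set cover of cost $D$ gives a 3TAP augmentation of cost at most $D$. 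Hence the two optima agree up to a factor of $2$.

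Next I would verify the density oracle step. The greedy set cover algorithm needs, at each iteration, to find (approximately) the most cost-effective set. For a fixed center $v$, finding the maximum density star means choosing a subset $U \subseteq N_G(v) \cup \{\text{endpoints of links at } v\}$ minimizing $c(S_v(U))/|E(G)[U]|$, where the numerator is the cost of buying all links from $v$ to $U$ and the denominator is the number of edges of $G$ induced on $U$. This is exactly an instance of Goldberg's maximum density subgraph (with vertex weights equal to the link costs from $v$), which is polynomial-time solvable on any input graph. Looping over the choice of $v \in V$ locates the globally densest set in polynomial time, regardless of whether $G$ is a tree.

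Finally I would invoke the standard greedy analysis for weighted set cover, which gives an $H_{|E(G)|} = O(\log |E(G)|) = O(\log n)$ approximation (using $|E(G)| \le \binom{n}{2}$). Composing with the factor of $2$ from the reduction yields the claimed $O(\log n)$ approximation for general $G$. I do not expect any real obstacle here: the only thing to guard against is a hidden use of tree structure (for example, in the definition of which edges a star covers, or in the density oracle), and the audit above confirms none of those steps actually needed it.
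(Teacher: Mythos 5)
Your argument is correct and is essentially the same as the paper's, which justifies the corollary in a single sentence by noting that the theorem's proof never used any properties of the tree $T$; you simply carry out the audit explicitly (universe $E(G)$, the doubling/star decomposition, Goldberg's max-density-subgraph oracle for each center $v$, greedy set cover), confirming step by step that the tree structure was never needed.
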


The above approximation is tight as the weighted 3TAP problem captures set-cover exactly. We will now show the matching lower bound.

\begin{theorem}
3TAP does not have a $\Omega(\log n)$-approximation unless $NP \subseteq P$.
\end{theorem}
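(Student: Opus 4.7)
The plan is to give an approximation-preserving reduction from minimum weighted Set Cover, which is well known to be NP-hard to approximate within any factor better than $(1-o(1))\ln n$. Given a set cover instance with universe $U = \{u_1, \ldots, u_n\}$, family $\mathcal{F} = \{S_1, \ldots, S_m\}$, and costs $c : \mathcal{F} \to \mathbb{R}_+$, I would build a tree $T$ on $1 + n + 3m$ vertices as follows: a root $r$; for each element $u_i$ a leaf $e_i$ with tree edge $(r, e_i)$; and for each set $S_j$ three new vertices $a_j, b_j, d_j$ with tree edges $(r, a_j)$, $(a_j, b_j)$, $(a_j, d_j)$. The links would be a \emph{selector} link $(r, b_j)$ of cost $c(S_j)$ for each set; zero-cost \emph{internal} links $(r, d_j)$ and $(d_j, b_j)$ for each set; and zero-cost \emph{incidence} links $(e_i, b_j)$ for every pair with $u_i \in S_j$.

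The key structural observation is that the tree edge $(r, e_i)$ can lie in a triangle only via some vertex $b_j$ with $u_i \in S_j$, using both the selector link $(r, b_j)$ and the incidence link $(e_i, b_j)$: $e_i$'s only neighbors in the augmented graph are $r$ (via the tree) and the $b_j$'s with $u_i \in S_j$ (via incidence links), and crucially $(r, b_j)$ is not itself a tree edge. Consequently, the selector links appearing in any feasible 3TAP solution $A$ induce a set cover of $U$ whose cost equals the cost of $A$, since the only nonzero-cost links are the selectors. Conversely, any set cover $\mathcal{C}$ of cost $C$ yields a feasible 3TAP solution of the same cost by taking $\{(r, b_j) : S_j \in \mathcal{C}\}$ together with all the zero-cost links: the internal tree edges in each set gadget are covered by the triangles $r - a_j - d_j - r$ and $a_j - d_j - b_j - a_j$, while each element tree edge $(r, e_i)$ is covered by the triangle $r - e_i - b_j - r$ for any $S_j \in \mathcal{C}$ containing $u_i$.

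Thus the 3TAP optimum on the constructed instance equals the set cover optimum exactly, and since the tree has $N = \Theta(n + m)$ vertices, an $o(\log N)$-approximation algorithm for 3TAP would yield an $o(\log n)$-approximation for set cover, contradicting the hardness bound unless $P = NP$. The main subtlety is verifying the ``only'' direction of the claim about triangles through $(r, e_i)$: once one checks that the gadget introduces no adjacency between $r$ and any $b_j$ besides the selector link, and no adjacency between $e_i$ and any vertex outside $\{b_j : u_i \in S_j\}$, the cost correspondence is tight and the rest of the argument is routine.
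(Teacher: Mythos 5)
Your reduction is correct and establishes the same $\Omega(\log n)$-inapproximability, but the gadget you use is genuinely different from the paper's. The paper's construction hangs a single apex vertex $s$ off the root $r$, arranges the set vertices $S_1, \dots, S_k$ on a path out of $r$, and attaches the element vertices as leaves of a separate vertex $t$; zero-cost links from $s$ to $r$, $t$, and each $S_i$ make every non-element tree edge coverable for free (each lies in a triangle through $s$), while the only way to cover an element edge $(t, e_j)$ is the pair of links $(t, S_i)$ of cost $c(S_i)$ and $(S_i, e_j)$ of cost $0$ with $e_j \in S_i$. Your construction instead builds a local three-vertex gadget $a_j, b_j, d_j$ per set, with two zero-cost internal links making the gadget's tree edges self-coverable and a selector link $(r, b_j)$ plus incidence links encoding membership. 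Both reductions preserve the optimum exactly; your per-set gadget is more modular and its correctness argument is purely local (you only need to check triangles inside one gadget plus the element edges), whereas the paper's version is more economical in vertex count ($3 + k + n$ versus $1 + 3m + n$) and leans on a single shared apex. One small presentational gap: you should state explicitly that $G$ contains only the links you list (which the problem definition permits, since the link set is part of the input), or alternatively follow the paper in assigning cost $1 + \sum_j c(S_j)$ to all unlisted links and arguing no optimal solution uses them; and you should note that the set-cover hardness instances you invoke have $m = \mathrm{poly}(n)$ sets, so that $\log N = \Theta(\log n)$. Neither is a substantive flaw.
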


\begin{figure}
\begin{center}
\begin{tikzpicture}
\tikzset{circle node/.style={draw,circle,fill=white,minimum size=4pt,inner sep=0pt}}
\node at (1,4) (d) [circle node, label = above:$s$] {};
\node at (0,3) (e) [circle node, label = left:$r$] {};
\node at (1,3) (S1) [circle node, label = above:$S_1$]{};
\node at (2,3) (S2) [circle node, label = above:$S_2$]{};
\node at (3,3) {$\cdots$};
\node at (4,3) (Sk) [circle node, label = above:$S_k$]{};
\node at (1,2) (c) [circle node, label = left: $t$] {};
\node at (0,1) (e1) [circle node, label = below:$e_1$] {};
\node at (1,1) (e2) [circle node, label = below:$e_2$] {};
\node at (2,1)   {$\cdots$};
\node at (3,1) (en) [circle node, label = below:$e_n$] {};
\draw (d)--(e)--(S1)--(S2);
\draw (e)--(c)--(e1);
\draw (c)--(e2);
\draw (c)--(en);
\end{tikzpicture}
\end{center}
\caption{The 3TAP instance created from a set cover instance.}
\label{fig:gadget}
\end{figure}
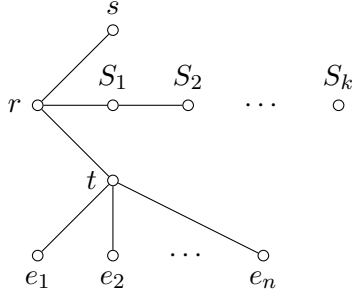

\begin{proof}
Consider an instance of set cover with sets $S_1, S_2, \dots, S_k$ and elements $e_1,e_2,\dots e_n$ and cost function $c$ on the sets. We will have our tree be as shown in Figure~\ref{fig:gadget}. The vertex set is
\[
\{r,s,t\}\cup \{S_i\}_{i=1}^k \cup \{e_j\}_{j=1}^n
\]
with the following costs on the links:
\begin{itemize}
\item Links from $s$ to vertices $\{r,t\} \cup \{S_i\}_{i=1}^k$
\item Links from $t$ to $S_i$ have cost $c(S_i)$
\item If $e_j \in S_i$ then the link from $e_j$ to $S_i$ has cost 0
\item All remaining links have cost $1+\sum_{i=1}^k c(S_i)$. Call the set of these remaining edges $L$.
\end{itemize}

In any optimal solution, we will not use any edges from $L$ as taking all the edges not in that set have smaller cost and give a feasible solution. The zero edges from $s$ allow every edge except for the $te_j$ edges to be in a three-cycle and they have cost 0. Now the only way to have an edge $te_j$ in a three cycle is for $tS_i$ and $e_jS_i$ to be used for some $S_i$ such that $e_j\in S_i$. So, the non-zero edges bought correspond to sets being chosen.

Consider any feasible solution to set cover $S_{i_1}, \dots S_{i_k}$, this can be turned into a feasible solution to 3TAP of the same cost. All the zero cost edges in addition to the $tS_{i_\ell}$ edges form a feasible solution. We know all the edges in the tree except for the $te_j$ edges are in a three cycle with zero cost edges. Consider any $j\in [n]$. There is some $S_{i_t}$ that contains $j$. The edge $te_j$ is then in a three cycle with $tS_{i_\ell}$ and $S_{i_\ell}e_j$. Hence, every feasible solution to the set cover instance gives a feasible solution of the same cost to the 3TAP instance.

Consider any feasible solution to our 3TAP instance. If the 3TAP solution contains an edge from $L$ then the solution has weight at least $1+\sum_{i=1}^k c(S_i)$, by taking all the sets $S_i$ we get a feasible solution to the set cover instance of less cost. Now consider there are no edges from $L$ in the feasible solution for the 3TAP instance. Let $tS_{i_1}, \dots tS_{i_t}$ be the non-zero edges in the solution. Therefore $S_{i_1}, \dots S_{i_t}$ is a feasible solution to the set cover instance. Consider any element $e_j$. The edge $te_j$ must be in some three cycle with $tS_{i_\ell}$ and $S_{i_\ell}$ therefore, $S_{i_\ell}$ contains $e_j$ and is a set in our solution to set cover. Therefore every feasible solution to 3TAP has a corresponding solution of set cover with the same or smaller cost.

Any feasible solution to set cover gives a solution to 3TAP of the same cost. Any feasible solution to 3TAP, gives a feasible solution to set cover of the same or smaller cost. Therefore, by the hardness of set cover, it is impossible to approximate three-cycle TAP to within a $\Omega(\log n)$ factor unless $NP \subseteq P$~\cite{arora2003improved}.
\end{proof}

{\bf Remark:}
Suppose we were given an empty initial graph to augment and wish to find a minimum-cost two-edge-connected spanning subgraph where every edge is in a triangle, it is not hard to adapt the above hardness: We give all edges in the tree zero cost. By further subdividing the path of set nodes  $S_1, S_2, \dots, S_k$ to add new dummy nodes between every pair of set nodes, we can ensure that every element node $e_j$ is covered only by triangles containing edge $(t,e_j)$. This requires that the other edges in the cycle are of the form $(t,S_i), (S_i, e_j)$ for some set $S_i$ containing the element $e_j$.

\subsection{Unweighted Version}
While weighted 3TAP has many similarities to set cover, the unweighted version admits a constant approximation unlike set cover. Here we consider the case that every non-tree edge has cost either $1$ or infinity, and every tree edge is present (and has cost $0$). This 4-approximation comes from lower bounding the cost of every feasible solution to unweighted 3TAP.

\begin{lemma}
Every feasible unweighted 3TAP solution has cost at least $\frac{n-1}{2}$.
\end{lemma}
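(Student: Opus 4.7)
The plan is to prove the lower bound by a double-counting / charging argument that associates each tree edge with a link in its covering triangle, and then shows that each link absorbs at most two such associations.

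First, I would observe the structural fact that drives the count: for each tree edge $e = uv$ in the solution, the 3TAP constraint forces a triangle $\{u, v, w_e\}$ in $G = T \cup A$. Since $T$ is acyclic, the three edges $uv, uw_e, vw_e$ cannot all be tree edges, so at least one of $uw_e$ and $vw_e$ is a link. In particular, each tree edge $e$ can be associated with a link $\ell_e$ in its covering triangle, and $\ell_e$ necessarily shares an endpoint with $e$.

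Next, I would set up the charging. The goal is to pick $w_e$ and $\ell_e \in \{uw_e, vw_e\} \cap A$ for each tree edge so that the map $e \mapsto \ell_e$ is at most $2$-to-$1$. If this can be done, then since there are $n-1$ tree edges, the number of links satisfies $n - 1 \leq 2|A|$, i.e., $|A| \geq (n-1)/2$, as desired. The intuition is that a link $\ell = pq$ has only two endpoints, and a tree edge $e$ that maps to $\ell$ must share a vertex with $\ell$ (so it is incident to $p$ or $q$). If one can arrange the association so that each (link, endpoint) slot is used at most once, then each link is charged at most twice.

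The main obstacle is to argue that such a balanced charging always exists. The risk is that many tree edges incident to a common vertex $v$ may naturally want to be charged to the same link $\ell = vw$ (this happens when many tree-neighbors of $v$ are simultaneously $G$-adjacent to $w$). I would handle this by a Hall-type matching argument: consider the bipartite graph whose left side is $E(T)$ and right side is the $2|A|$ slots $(v, \ell)$ with $v \in \ell$, connecting $e$ to $(v, \ell)$ when $v \in e$ and $\ell$ lies in some triangle that covers $e$ with $v$ as the shared endpoint. Hall's condition reduces to showing that for every $S \subseteq E(T)$, the set of available slots has size at least $|S|$. This in turn follows from a structural fact that any two endpoints $x, y$ of a single link share at most one common tree-neighbor (otherwise the tree would contain a cycle of length four), which limits how badly multiple tree edges can ``collide'' on the same slot and lets one reroute using the flexibility in the choice of witness $w_e$ or in the choice between $uw_e$ and $vw_e$ when both are links. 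Once the matching exists, the bound $|A| \geq (n-1)/2$ follows immediately from the charging.
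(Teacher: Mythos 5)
Your high-level strategy---charge each tree edge to a link in one of its covering triangles and show the charging can be made at most $2$-to-$1$, so that $n-1 \le 2|A|$---is essentially the same as the paper's, and your ``slot'' structure $(v,\ell)$ with $v\in\ell$ is exactly the right object to count. However, the proof has a genuine gap at the crucial step: you reduce the claim to Hall's condition on your bipartite graph and then assert that it ``follows from'' the observation that two endpoints of a link have at most one common tree-neighbor, together with an unspecified ``rerouting'' argument. That structural fact is true, but it does not by itself yield $|N(S)| \ge |S|$ for every $S\subseteq E(T)$; it only controls collisions between pairs of tree edges that happen to use the same single link, and says nothing about the global pattern by which a large set $S$ of tree edges might be forced onto a small set of slots. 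No actual verification of Hall's condition is given, so the matching whose existence drives the whole bound is never shown to exist.

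The paper avoids Hall's theorem entirely by \emph{constructing} the $2$-to-$1$ charging explicitly. Duplicate every edge and partition the copies into stars $S_v$ around each vertex $v$, so each link lies in exactly two stars; then refine each $S_v$ into substars $S_v^i$, one per component of $T-v$, so that each $S_v^i$ contains at most one tree edge. A tree edge $ab$ covered by a triangle $\{a,b,w\}$ is charged inside $S_w^i$, the substar whose leaves lie in the component of $T-w$ containing $a$ and $b$. The key point is then purely local: the leaves of $S_w^i$ all lie in one component of $T-w$, so the covered tree edges among them induce a forest on at most $x+1$ vertices (where $x$ is the number of links in $S_w^i$), hence at most $x$ edges, giving an injection from the tree edges covered by $S_w^i$ into its links. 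Summing over all substars gives at least $n-1$ tree-edge charges distributed over at most $2|A|$ link-slots. To make your argument rigorous, you could either prove Hall's condition directly (which I suspect will recreate this forest observation in disguise) or adopt the star decomposition outright, which produces the system of distinct representatives constructively and makes Hall unnecessary.
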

\begin{proof}
Consider any solution $S$. Duplicate all the links of $S$ and  edges $T$ and forming stars around every vertex consisting of the edges adjacent to it. Call the star around $v$, $S_v$.  This doubles the cost of the solution, but now we can see that every tree edge is covered by some star.  At every vertex, we can further decompose $S_v$ into $S_v^1, \dots S_v^{\ell_v}$ such that we get stars that cover different connected components of the tree and every star contains at most one tree edge.

Now consider any star $S_v^i$. If $S_v^i$ has $x$ links, then the number of tree edges it can cover with 3-cycles is at most $x$. So, in the doubled instance of $S$ there must be at least $n-1$ edges. Every link is in at most $2$ stars; there must be at least $\frac{n-1}{2}$ edges in any feasible solution.
\end{proof}

\begin{corollary}
Unweighted 3TAP has a 4-approximation.
\end{corollary}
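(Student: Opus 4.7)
The plan is to combine the $(n-1)/2$ lower bound from the preceding lemma with a matching upper bound, namely that any \emph{minimal} feasible solution $S$ (one for which no proper subset is feasible) uses at most $2(n-1)$ links. Such an $S$ is produced in polynomial time by starting from the set of all cost-$1$ links (feasible whenever the instance is feasible, which can be checked directly) and greedily discarding any link whose removal keeps every tree edge inside some triangle of $T\cup S$.

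To bound $|S|$, I assign each link $\ell\in S$ a \emph{witness} tree edge $e(\ell)$ as follows. By minimality, $S\setminus\{\ell\}$ is infeasible, so there is a tree edge $e$ that lies in no triangle of $T\cup(S\setminus\{\ell\})$; equivalently, every triangle of $T\cup S$ containing $e$ must also use $\ell$. Set $e(\ell):=e$ (picking arbitrarily if several tree edges qualify).

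The key observation is that the fiber of this map over any tree edge $e$ has size at most $2$. Indeed, by feasibility of $S$ there exists at least one triangle $\Delta\subseteq T\cup S$ containing $e$; if $e(\ell)=e$, then $\ell$ lies in every such triangle, and in particular in $\Delta$. Since $\ell$ is a link and $e$ is a tree edge, $\ell\in\Delta\setminus\{e\}$, a set of size exactly two. Hence at most two links can share any given witness, giving $|S|\le 2(n-1)$. Combining with the previous lemma,
\[
|S|\;\le\;2(n-1)\;=\;4\cdot\frac{n-1}{2}\;\le\;4\cdot\mathrm{OPT},
\]
which is the desired $4$-approximation. No step here looks hard; the only place that requires mild care is verifying that the map $e(\ell)$ is well-defined from minimality and that the arbitrary choice of witness does not affect the fiber-size bound, both of which follow immediately from the definitions.
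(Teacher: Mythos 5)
Your proof is correct, and it takes a somewhat different route from the paper's. The paper's proof works by a direct construction: for each tree edge $ab$, pick an apex $v$ with $av,bv$ available and add those two links, giving a feasible solution of size at most $2(n-1)$ outright. You instead prove the stronger structural statement (which the paper asserts but does not actually argue) that \emph{every} minimal feasible solution has at most $2(n-1)$ links, via a clean witness/counting argument: each link in a minimal solution has a witness tree edge that becomes uncovered if the link is deleted, and any fixed triangle through a tree edge $e$ forces all links with witness $e$ to lie in the two-element set $\Delta\setminus\{e\}$, so fibers have size at most $2$. Your version is slightly more informative (it justifies the ``any minimal solution'' phrasing), whereas the paper's version is more directly algorithmic; both yield the same $|S|\le 2(n-1)\le 4\cdot\mathrm{OPT}$ bound when combined with the preceding lemma.
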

\begin{proof}
We can get a $4$ approximation by simply taking any minimal feasible solution. For every edge $ab$, pick a $v$ such that $av,bv$ both have cost $0$ or $1$. If no such vertex exists, then no feasible solution exists. Otherwise, the algorithm chooses at most $2(n-1)$ links. This gives a $4$ approximation as desired.
\end{proof}

\section{Conclusions}
We have introduced a new top down coloring method that gives a strict improvement over existing 2-approximation algorithms for weighted TAP, with better improvements for larger minimum values in the LP. Our methods give constructive convex combinations into feasible solutions and when coupled with the strengthened ODD-LP for the problem have much potential to settle the integrality gap for this fundamental network design problem. We also settled the approximation complexity of the special case when all edges in the final solution must be in triangles -- the extensions to short constant-length cycles in place of triangles is immediate. We hope our new algorithms will provide a stepping stone to settling the integrality gap for weighted TAP.

\bibliographystyle{plain}
\bibliography{refs}

\begin{thebibliography}{10}

\bibitem{adjiashvili2016improved}
David Adjiashvili.
\newblock Improved approximation for weighted tree augmentation with bounded
  costs.
\newblock {\em arXiv preprint arXiv:1607.03791}, 2016.

\bibitem{arora2003improved}
Sanjeev Arora and Madhu Sudan.
\newblock Improved low-degree testing and its applications.
\newblock {\em Combinatorica}, 23(3):365--426, 2003.

\bibitem{carr2000randomized}
Robert Carr and Santosh Vempala.
\newblock Randomized metarounding.
\newblock In {\em Proceedings of the thirty-second annual ACM symposium on
  Theory of computing}, pages 58--62. ACM, 2000.

\bibitem{CheriyanG15a}
Joseph Cheriyan and Zhihan Gao.
\newblock Approximating (unweighted) tree augmentation via lift-and-project,
  part {I:} stemless {TAP}.
\newblock {\em CoRR}, abs/1508.07504, 2015.

\bibitem{CheriyanG15b}
Joseph Cheriyan and Zhihan Gao.
\newblock Approximating (unweighted) tree augmentation via lift-and-project,
  part ii.
\newblock {\em arXiv preprint arXiv:1507.01309}, 2015.

\bibitem{cheriyan1999}
Joseph Cheriyan, Tibor Jord{\'a}n, and R~Ravi.
\newblock On 2-coverings and 2-packings of laminar families.
\newblock {\em Algorithms-ESA’99}, pages 72--72, 1999.

\bibitem{cheriyan2008integrality}
Joseph Cheriyan, Howard Karloff, Rohit Khandekar, and Jochen K{\"o}nemann.
\newblock On the integrality ratio for tree augmentation.
\newblock {\em Operations Research Letters}, 36(4):399--401, 2008.

\bibitem{cohen2013}
Nachshon Cohen and Zeev Nutov.
\newblock A (1+ ln2)-approximation algorithm for minimum-cost
  2-edge-connectivity augmentation of trees with constant radius.
\newblock {\em Theoretical Computer Science}, 489:67--74, 2013.

\bibitem{fiorini2017frac}
Samuel Fiorini, Martin Gro{\ss}, Jochen K{\"o}nemann, and Laura Sanit{\`a}.
\newblock A 3/2-approximation algorithm for tree augmentation via
  chv{\'a}tal-gomory cuts.
\newblock {\em arXiv preprint arXiv:1702.05567}, 2017.

\bibitem{frederickson1981approximation}
Greg~N Frederickson and Joseph J{\'a}j{\'a}.
\newblock Approximation algorithms for several graph augmentation problems.
\newblock {\em SIAM Journal on Computing}, 10(2):270--283, 1981.

\bibitem{fredrickson1982relationship}
Greg~N Fredrickson and Joseph J{\'a}j{\'a}.
\newblock On the relationship between the biconnectivity augmentation and
  traveling salesman problem.
\newblock {\em Theoretical Computer Science}, 19(2):189--201, 1982.

\bibitem{maxden}
Andrew~V Goldberg.
\newblock {\em Finding a maximum density subgraph}.
\newblock University of California Berkeley, CA, 1984.

\bibitem{golovin2006approximating}
Daniel Golovin, Viswanath Nagarajan, and Mohit Singh.
\newblock Approximating the k-multicut problem.
\newblock In {\em Proceedings of the seventeenth annual ACM-SIAM symposium on
  Discrete algorithm}, pages 621--630. Society for Industrial and Applied
  Mathematics, 2006.

\bibitem{jain2001factor}
Kamal Jain.
\newblock A factor 2 approximation algorithm for the generalized steiner
  network problem.
\newblock {\em Combinatorica}, 21(1):39--60, 2001.

\bibitem{khuller1993approximation}
Samir Khuller and Ramakrishna Thurimella.
\newblock Approximation algorithms for graph augmentation.
\newblock {\em Journal of Algorithms}, 14(2):214--225, 1993.

\bibitem{kortsarz2015}
Guy Kortsarz and Zeev Nutov.
\newblock A $1.75 $ lp approximation for the tree augmentation problem.
\newblock {\em arXiv preprint arXiv:1507.03009}, 2015.

\bibitem{32KortsarzNutov}
Guy Kortsarz and Zeev Nutov.
\newblock A simplified 3/2 ratio approximation algorithm for the tree
  augmentation problem.
\newblock {\em Transaction on Algorithm}, 12(2):23, 2016.

\bibitem{lrs}
Lap~Chi Lau, Ramamoorthi Ravi, and Mohit Singh.
\newblock {\em Iterative methods in combinatorial optimization}, volume~46.
\newblock Cambridge University Press, 2011.

\bibitem{nutov2017}
Zeev Nutov.
\newblock A note on the tree augmentation problem.
\newblock {\em arXiv preprint arXiv:1703.07247}, 2017.

\bibitem{ravithesis}
R.~Ravi.
\newblock {\em Steiner Trees and Beyond: Approximation Algorithms for Network
  Design}.
\newblock PhD thesis, Brown University, 1994.

\end{thebibliography}

\appendix
\section{Odd Constraints}
\label{app:odd}

In this section, we provide a new proof of correctness of the ODD-LP.

\begin{lemma}
The constraints in ODD-LP are valid for any integer solution to TAP.
\end{lemma}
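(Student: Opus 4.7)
My plan is to prove the odd-set constraint as a $\{0,\tfrac12\}$-Chv\'atal-Gomory cut derived from the edge constraints, using a parity argument enabled by the fact that the tree is binary. Throughout, fix an odd set $S \subseteq V$ and an integer feasible solution $x$ to TAP.

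First I would record a parity observation about tree paths. For a link $\ell$ with endpoints $u,v$, let $k_\ell$ denote the number of edges of $\delta(S)\cap T$ covered by $\ell$, i.e., the number of edges of $\delta(S)$ on the unique $u$--$v$ path in $T$. Since this path crosses $\delta(S)$ each time it enters or leaves $S$, $k_\ell$ is odd if and only if $u$ and $v$ lie on opposite sides of $S$, which is exactly the condition $\ell \in \delta(S)$. Second, since $T$ is binary every node has odd tree-degree, so summing tree-degrees over $S$ gives $\sum_{v \in S} \deg_T(v) = 2|E_T(S)| + |\delta(S)\cap T|$, and the left side is a sum of $|S|$ (odd) odd numbers, hence odd; therefore $|\delta(S)\cap T|$ is odd as well.

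Next I would algebraically rewrite the left-hand side of~(\ref{eq:odd}). By counting each link $\ell$ once for every tree edge in $\delta(S)\cap T$ that it covers,
\begin{equation*}
\sum_{e\in \delta(S)\cap T} x(\delta(e)) \;=\; \sum_{\ell} k_\ell\, x_\ell,
\end{equation*}
so the LHS equals $\sum_{\ell}\bigl(k_\ell + [\ell\in\delta(S)]\bigr)\, x_\ell$. By the first observation, $k_\ell + [\ell\in\delta(S)]$ is \emph{even} for every link $\ell$: it is odd plus one when $\ell\in\delta(S)$, and even plus zero otherwise. Consequently, for any nonnegative integer vector $x$, the LHS is a nonnegative \emph{even} integer.

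Finally I would combine this parity with a simple lower bound. Summing the edge inequalities $x(\delta(e)) \geq 1$ over $e\in\delta(S)\cap T$ gives $\sum_{e\in\delta(S)\cap T} x(\delta(e)) \geq |\delta(S)\cap T|$, and since $x(\delta(S)) \geq 0$ the LHS is at least $|\delta(S)\cap T|$, which is odd. An even integer that is at least an odd integer must exceed it by at least one, so the LHS is at least $|\delta(S)\cap T|+1$, as required. The only subtle step is the parity-of-crossings claim for $k_\ell$; once that is in hand (together with the binary-tree fact that $|\delta(S)\cap T|$ is odd), the rest is bookkeeping.
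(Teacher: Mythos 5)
Your proof is correct and is essentially the same argument as the paper's (attributed to Robert Carr): both treat the odd-set constraint as a $\{0,\tfrac12\}$-Chv\'atal-Gomory cut obtained by summing the edge constraints over $\delta(S)\cap T$, adding the nonnegative term $x(\delta(S))$, noting every $x_\ell$ then has an even coefficient on the left, and rounding the odd right-hand side $|\delta(S)\cap T|$ up by one. You merely make the parity bookkeeping (the crossing-count $k_\ell$ and the degree-sum argument for $|\delta(S)\cap T|$ being odd) more explicit than the paper does.
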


\begin{proof}[Proof by Robert Carr]
Consider an odd set of vertices $S$. By adding together the edge constraints for $\delta(S) \cap T$ we get:
\[
\sum_{e\in \delta(S)\cap T} x(\delta(e)) \geq |\delta(S)\cap T|
\]
Now we can add any non-negative terms to the left hand side and still remain feasible. Therefore

\[
x(\delta(S))+\sum_{e\in \delta(S)\cap T} x(\delta(e)) \geq |\delta(S)\cap T|
\]
is also feasible. Now consider any link $\ell$. If $x_\ell$ appears an even number of times in $\sum_{e\in \delta(S)\cap T} x(\delta(e))$ then $\ell$ is not in $\delta(S)$. Similarly, if $x_\ell$ appears an odd number of times in $\sum_{e\in \delta(S)\cap T} x(\delta(e))$ then $\ell$ is in $\delta(S)$. So, the coefficient of every $x_\ell$ on the left hand side of this expression is even. In particular, for any integer solution the left hand side is even and the right hand side is odd. Therefore, we can strengthen the right hand side by increasing it by one, and the resulting constraint will still be feasible for any integer solution. The constraint
\[
x(\delta(S))+\sum_{e\in \delta(S)\cap T} x(\delta(e)) \geq |\delta(S)\cap T|+1
\]
is valid for any integer solution to TAP as desired.
\end{proof}

\iffalse
The ODD-LP is very close to being a $T$-join LP. Let's introduce a variable $x_e$ for every edge $e$. Now we can rewrite the ODD-LP as follows:
\begin{align}
\min \sum_{\ell \in E} &c_\ell x_\ell& \nonumber \\
 \label{eq:oddnew} x(\delta(S))  &\geq 1 \quad &\forall S \subseteq V, |S| \text{ odd} \\
 \label{eq:edgeVar} x_e  &= x(\delta(e))-1 \quad &\forall e\in T \\
 x_\ell&\geq 0 \quad &\forall \ell\in E \nonumber
\end{align}

Notice that constraint~\ref{eq:oddnew} is in fact the inequalities for the up-hull of the $T$-join polytope~\cite{edmonds1973matching}. This means that we could easily decompose any fractional solution into a convex combination of $T$-joins easily. Unfortunately, these $T$-joins might contain copies of tree edges.
\fi

\end{document}